\documentclass{article}
\usepackage[a4paper, total={6.5in, 10in}]{geometry}
\usepackage[utf8]{inputenc}
\usepackage[dvipsnames]{xcolor}
\usepackage{amsfonts}
\usepackage{amsthm}
\usepackage{amsmath}
\usepackage{parskip}
\usepackage{mathrsfs}
\usepackage{amssymb}
\usepackage{xtab}
\usepackage{setspace}
\usepackage{datetime}
\usepackage{svg}
\usepackage{csquotes}
\usepackage{fancyhdr}
\usepackage{float}
\usepackage{comment}
\usepackage{enumitem}
\usepackage{mathalfa}
\usepackage{array}
\usepackage{titlesec}
\usepackage{mdframed}
\usepackage{listings}
\usepackage{booktabs}
\usepackage{dsfont}
\usepackage{textcomp}
\usepackage{graphicx}
\usepackage{wrapfig}
\usepackage{cases}

\usepackage[font=small,labelfont=bf]{caption}

\usepackage[utf8]{inputenc}
\usepackage[english]{babel}

\usepackage{placeins}
\usepackage{subcaption}
\usepackage{hyperref}
\usepackage{cleveref}

\usepackage{cite}

\usepackage{multirow}

\graphicspath{{Figures/}}

\allowdisplaybreaks

\usepackage{thmtools}
\usepackage{thm-restate}
\declaretheorem[name=Proposition,numberwithin=section]{proposition}

\declaretheorem[name=Theorem,numberwithin=section]{theorem}
\declaretheorem[name=Definition,numberwithin=section]{definition}
\declaretheorem[name=Remark,numberwithin=section]{remark}
\declaretheorem[name=Corollary,numberwithin=section]{corollary}
\declaretheorem[name=Assumption]{assumption}

\newcounter{subassumption}

\usepackage{circuitikz}
\usetikzlibrary{decorations.pathreplacing,calligraphy}

\newcommand{\EE}{\mathds{E}}
\newcommand{\var}{\text{Var}}
\usepackage{lscape}
\usepackage{authblk}
\usepackage{tabularray}

\title{Emergent structures in coupled opinion and network dynamics}
\author[1,2]{Andrew Nugent}
\author[3]{Carmen Calatayud Fern\'andez}
\author[2]{Susana N. Gomes}
\affil[1]{Department of Mathematics, University College London}
\affil[2]{Mathematics Institute, University of Warwick}
\affil[3]{Department of Physics, University of Warwick}

\begin{document}

\maketitle

\begin{abstract}
    This paper investigates a model of opinion formation on an adaptive social network, consisting of a system of coupled ordinary differential equations for individuals' opinions and corresponding network edge weights. A key driver of the system's behaviour is the form of the interaction function, which determines the strength of interactions based on the distance between individuals' opinions and appears in both opinion and network dynamics. Two cases are examined: in the first the interaction function is always positive and in the second case the interaction function is of bounded-confidence type. In both cases there is positive feedback between opinion clustering and the emergence of community structure in the social network. This is confirmed through analytical results on long-term behaviour, extending existing results for a fixed network, as well as through numerical simulations. Transient network dynamics are also examined through a short-time approximation that captures the `typical' early network dynamics. Each approach improves some aspect of our understanding of the interplay between opinion and network evolution. 
\end{abstract}

\tableofcontents

\newpage

\section{Introduction} \label{Section: Introduction}

From initially simple models of group decision making \cite{degroot1974reaching} and binary opinion switching \cite{liggett1985interacting}, through the bounded the confidence models of Hegselmann-Krause (HK) \cite{hegselmann2002opinion} and Deffuant-Weisbuch (DW) \cite{deffuant2000mixing}, to the rapidly growing field of modern opinion dynamics, models have become increasingly complex and varied. Despite this, common themes emerge, such as the formation of distinct opinion clusters and a transition from consensus to polarisation. 

As models of opinion formation on adaptive networks become increasingly common, another theme can be seen: the positive reinforcement between opinion clustering and community structure in networks. Just as opinion formation models vary, with opinions taking either discrete or continuous values, updates that may be random or deterministic, synchronous or asynchronous, multiple approaches have been proposed to describe adaptive social networks. In those based on the DW model or similar, in which randomly selected pairs of individuals interact if their opinions are sufficiently close, a natural approach is to rewire the social network according to these same distances \cite{kan2023adaptive,yu2017opinion,gargiulo2009opinion,de2022modelling}. Random network rewiring has also been used in models with binary opinions, see for example \cite{borges2024social,min2023coevolutionary} and the review \cite{vazquez2013opinion}, as well as in a version of the HK model \cite{su2014coevolution}. 

It is often typical to rewire according to the principle of homophily, meaning that individuals with close opinions, who are therefore deemed to be more similar, are more likely to be connected. Other mechanisms include triadic closure \cite{asikainen2020cumulative} and network updates based on structural balance theory \cite{kang2022coevolution}. In \cite{djurdjevac2024co}, individuals instead have a position in a `social space' alongside their opinions, which determines their network connections. Alternatively, in \cite{evans2018opinion}, both individuals' opinion and their social connections are modified as part of an evolutionary game, attempting to optimise a function combining the strength of their opinions, alignment with their neighbours and their total number of connections. 

Adaptive networks have also been applied to models in continuous time and opinion space, often related to the continuous time version of the HK model. In this setup, opinions evolve according to a set of ordinary or stochastic differential equations, and the network may evolve according to random rewiring \cite{iniguez2009opinion,liu2023emergence} or through its own set of differential equations \cite{nugent2023evolving,NugentGomesWolfram2024}. 

We focus here on this last case, specifically the model introduced in \cite{nugent2023evolving}, aiming to gain a deeper understanding of both long-term and short-time dynamics. The paper is organised as follows: the precise model setup and assumptions are introduced in Section \ref{Section: Model setup}. Section \ref{Section: Full support} studies the case where the function determining interactions is always strictly positive, while Section \ref{Section: Compact support} studies the bounded confidence case. Both sections include analytical results on the convergence to clusters and communities, extending previous results for a fixed network to cover various network dynamics. Section \ref{Section: Compact support} also includes numerical simulations where analytical results are not possible due to strong dependence on the initial network. The work in these sections provides a more complete understanding of the long-time dynamics of opinion formation under an adaptive network. Finally in Section \ref{Section: Early dynamics}, we examine the early weight dynamics before concluding in Section \ref{Section: Conclusion}. 

\section{Model Setup} \label{Section: Model setup}

This paper will primarily study the model introduced in \cite{nugent2023evolving}, which takes the form of a coupled system of ordinary differential equations (ODEs) for individuals' opinions and network edge weights. We first introduce the opinion dynamics; these are inspired by the classical HK model, its formulation in \cite{motsch_heterophilious_2014}, as well as the derivation from the discrete-interaction DW model discussed in \cite{nugent2024bridging}. 

Denote by $x_i\in[-1,1]$ the opinion of individual $i$, $w_{ij}\in[0,1]$ the edge weight between individuals $i$ and $j$, $k_i$ the degree of individual $i$ given by 
\begin{align*}
    k_i = \sum_{j=1}^N w_{ij},
\end{align*}
and finally $\phi:[-2,2]\rightarrow[0,1]$ the interaction function. The interaction function describes the likelihood or strength of interactions between two individuals as a function of the difference between their opinions. Opinions then evolve according to the following system of ODEs
\begin{align} \label{Eqn: Opinion ODEs}
    \frac{dx_i}{dt} &= \frac{1}{k_i} \sum_{j=1}^N w_{ij}\,\phi\big(x_j - x_i\big)\,(x_j - x_i).
\end{align}
In line with \cite{nugent2023evolving}, we introduce several assumptions to ensure this system is well-posed and fits within the framework of opinion formation models. 

\begin{assumption} \label{Assumption group: standard}
    The following are assumed to hold throughout. 
    \begin{enumerate}[label=\alph*)]
        \item The interaction function $\phi$ is Lipschitz continuous, with Lipschitz constant denoted $L_\phi$, and satisfies $\phi(r) = \phi(-r)$ for all $r\in[-2,2]$ and $\phi(0)>0$. \label{Assumptions on phi}
        \item Initial opinions are ordered, that is $x_1(0)\leq x_2(0) \leq\dots\leq x_N(0)$. \label{Assumptions on x0}
        \item All individuals give their own opinion weight $1$, that is $w_{ii}=1$ for all $i=1,\dots,N$. \label{Assumptions on w}
        \item The initial network $w(0)$ is strongly connected. That is, for any individuals $i,j$ there exists a sequence of individuals $i_n, n=0,\dots,m$ with $i_0=i, \, i_m=j$ and $w_{i_n,i_{n+1}}(0)>0$.
    \end{enumerate} 
\end{assumption}

The assumption on the continuity of $\phi$ is necessary for uniqueness of solutions to \eqref{Eqn: Opinion ODEs}, avoiding the issues discussed in \cite{blondel_continuous-time_2009} arising from the discontinuous bounded confidence interaction function. The assumption that $w_{ii}=1$ ensures that $k_i\geq1>0$ for all individuals, avoiding any possible issues in dividing by $k_i$ if individual $i$ is (or later, becomes) disconnected from the rest of the population. The assumption on the ordering of initial opinions is made only for simplicity and is not a necessary component of the model. The final assumption ensures that all parts of the network communicate with each other and thus may be considered as a single system, rather than being separated into distinct connected components. 

This model on a fixed, symmetric network has been studied in \cite{nugent2023evolving} where, using an energy argument similar to that in \cite{motsch_heterophilious_2014}, it is shown that the population converges to a clustered state in which individuals are partitioned into groups who share the same opinion but do not interact with each other. One such clustered state is consensus, in which all individuals hold the same opinion. This can be defined in several equivalent ways: firstly that there exists a value $x\in[-1,1]$ such that $x_i\rightarrow x$ for all $i=1,\dots,N$; secondly that the opinion diameter
\begin{align*}
    D(t) = \max_{i,j=1,\dots,N} |x_j(t) - x_i(t) | \,,
\end{align*}
converges to zero as $t\rightarrow\infty$. Another means to measure clustering is by defining an order parameter, originally introduced for opinion dynamics in \cite{wang2017noisy}, given by 
\begin{align} \label{eqn: order parameter}
    Q = \frac{1}{N^2 M_\phi} \sum_{i,j=1}^N \phi(x_j - x_i)  \,,
\end{align}
where $M_\phi$ is the maximum value of $\phi(r)$ over $r\in[-2,2]$. Assuming $\phi$ takes its maximum at $\phi(0)=1$, as with exponential or bounded confidence interaction functions, then $Q^{-1}$ is approximately the number of clusters with $Q=1$, giving a third equivalent description of consensus. This order parameter will be used in Section \ref{Section: Simulations BC} to measure the outcome of simulations. 

We now introduce the full coupled opinion and network dynamics proposed in \cite{nugent2023evolving}. The system is given by the following ODEs:
\begin{subequations} \label{Eqn: general dynamic network system}
\begin{align}
    \frac{dx_i}{dt} &= \frac{1}{k_i(t)} \sum_{j\neq i} w_{ij}\,\phi\big(x_j - x_i\big)\,(x_j - x_i) \,, & i = 1,\dots,N \label{Eqn: general dynamic network system x_i}\\
    \frac{dw_{ij}}{dt} &= \phi\big(x_j - x_i\big)\, f^+(w)_{ij} - \Big(1 - \phi\big(x_j - x_i\big)\Big)\, f^-(w)_{ij} \,, & i,j = 1,\dots,N,\, i\neq j \label{Eqn: general dynamic network system w_ij} \\[0.5em]
    \frac{dw_{ii}}{dt} &= 0, & i = 1,\dots,N.
\end{align}
\end{subequations}
The motivation for this structure is to use the interaction function to balance an edge growth term ($f^+(w)_{ij}$) against an edge removal term ($f^-(w)_{ij}$). These are functions of the entire network to allow for network dynamics that depend on other edges, for example triadic closure. 

Before introducing several assumptions on the form of $f^+$ and $f^-$, we define the following sets
\begin{align*}
    \mathcal{W}_{ij}^0 &= \{ w \in [0,1]^{N\times N} : w_{ij} = 0\}, \\
    \mathcal{W}_{ij}^1 &= \{ w \in [0,1]^{N\times N} : w_{ij} = 1\} \,.
\end{align*}

\begin{assumption} \label{Assumption group: weight dynamics}
    We make the following assumptions on the network dynamics:
    \begin{enumerate}[label=\alph*)]
        \item The functions $f^+:[0,1]^{N \times N} \rightarrow \mathbb{R}_{\geq0}^{N \times N}$ and $f^-:[0,1]^{N \times N} \rightarrow \mathbb{R}_{\geq0}^{N \times N}$ are both non-negative and Lipschitz continuous. 
        \item For all $w \in \mathcal{W}_{ij}^1$, $f^+(w)_{ij} = 0$. 
        \item For all $w \in \mathcal{W}_{ij}^0$, $f^-(w)_{ij} = 0$. 
    \end{enumerate}
\end{assumption}

These assumptions ensure that the system \eqref{Eqn: general dynamic network system} is well-posed and all edge weights remain in the interval $[0,1]$. In this paper we will focus on two edge weight dynamics introduced in \cite{nugent2023evolving}, namely memory weight dynamics and logistic weight dynamics. 

Memory weight dynamics simply bring the edge weight towards the interaction function, providing a `memory' of individuals' interaction history. By taking $f^+(w)_{ij} = (1-w_{ij})$ and $f^-(w)_{ij} = w_{ij}$ we have
\begin{align}
    \label{eqn: memory weight dynamics}
    \frac{dw_{ij}}{dt}
    &= \phi(x_j - x_i) - w_{ij} \,.
\end{align}
This choice of weight dynamics can both create and remove edges, and thus replaces much of the initial network structure and can have a major impact on the opinion formation process \cite{nugent2023evolving}. Naturally, the network clustering behaviour can be expected to mirror the clustering behaviour that is typical in opinion dynamics. 

Logistic weight dynamics change edge weights only, without creating any new edges. We take $f^+(w)_{ij} = f^-(w)_{ij} = w_{ij}(1 - w_{ij})$ giving
\begin{align}
    \label{eqn: logistic weight dynamics}
    \frac{dw_{ij}}{dt} 
    &= w_{ij}\,(1-w_{ij})\,(2\phi(x_j-x_i) - 1 ) \,.
\end{align}
Here, the initial network structure plays a much greater role as edges cannot be created, and so we may expect less predictability in both opinion and network dynamics. 

The following sections describing the long-term dynamics are split to address the different behaviours arising from two types of interaction functions. To separate these categories of interaction functions, we first introduce the following definition from \cite{NugentGomesWolfram2024}.
\begin{definition}
     For a given interaction function $\phi:[-2,2]\rightarrow[0,1]$ we denote the set of roots of $\phi$ by 
    \begin{equation}
        \mathcal{R}_\phi = \{ r \in [0,2] : \phi(|r|) = 0 \}.
    \end{equation}
    If $\mathcal{R}_\phi$ is empty then define $R = 2$, otherwise let $R = \inf (\mathcal{R}_\phi)$. 
\end{definition}
In Section \ref{Section: Full support} we consider the behaviour of the coupled opinion and network dynamics \eqref{Eqn: general dynamic network system} for interaction functions with $R=2$ and a variety of weight dynamics. In Section \ref{Section: Compact support}, we then consider interaction functions with $R<2$. 

\section{Interaction functions with full support} \label{Section: Full support}

As $R=2$, we can deduce that $\phi(r)>0$ for all $|r|<2$. If it is the case that $\phi(r)>c>0$ for all $r \in [-2,2]$ then individuals will always interact to some extent. As a result, the only way to avoid consensus is for the weight dynamics to separate the population. This is the main scenario we wish to consider in this section, hence we first address the specific case that $\phi(2)=0$. 

As all opinions remain inside the range of the initial opinions \cite{nugent2023evolving}, the value of $\phi(\pm2)$ is only relevant if at time $t=0$ there exists at least one individual with opinion $+1$ and at least one individual with opin ion $-1$ (otherwise there will be no distances of length 2 and so the value of $\phi(2)$ is irrelevant). As initial opinions are ordered we will therefore have that $x_1(0)=-1$ and $x_N(0)=1$. If all individuals have opinions in the set $\{-1,1\}$, then the opinion dynamics will be at a fixed point, and all weights will converge according to 
\begin{align*}
    \frac{dw_{ij}}{dt} =
    \begin{cases}
        \phi(0)\, f^+(w)_{ij} - \big(1 - \phi(0)\big)\, f^-(w)_{ij}, & \text{ if } x_i(0)=x_j(0), \\[0.4em]
        - f^-(w)_{ij}, & \text{ if } x_i(0) \neq x_j(0)
    \end{cases}\,.
\end{align*}

Otherwise, there must be some individual $i$ with an opinion in the open interval $(-1,1)$. Since the initial network is connected, there exists a path in $w(0)$ connecting individual $i$ to both individual $1$ and individual $N$. As the weight dynamics have finite speed there exists an $\varepsilon>0$ and a time $T>0$ such that all edges in this same path have weight at least $\varepsilon$ for all $0\leq t \leq T$. Considering the individuals neighbouring individual $i$ in this path, their opinions must immediately enter the open interval $(-1,1)$, if they don't begin there, as they interact with $x_i$. Once this occurs the same must happen for the individuals connected to them and so on until individuals $1$ and $N$ enter this open interval. At this point, everyone will have opinions in this open interval. The value of $\phi(\pm2)$ is therefore irrelevant after an arbitrarily small amount of time. 

As the focus of this section is on the emergent network structure and long-time behaviour of \eqref{Eqn: general dynamic network system}, we conclude that the only case in which $\phi(2)=0$ must be considered separately is when $x_i\in\{-1,1\}$ for all $i$, which is addressed above. For the remainder of this section we include the following additional assumption on $\phi$.  

\begin{assumption} \label{Assumption: phi > c}
    There exists a constant $c_\phi>0$ such that $\phi(r)>c_\phi>0$ for all $r\in[-2,2]$. 
\end{assumption}

In the case of a fixed network, Assumption \ref{Assumption: phi > c} is sufficient to guarantee convergence to consensus, see for example Proposition 2.2 in \cite{nugent2023evolving}. We now consider several examples of weight dynamics introduced above and address whether a similar result holds. 

\subsection{Memory weight dynamics}

In the case of memory weight dynamics it is possible to establish a similar result for the long-time behaviour of \eqref{Eqn: general dynamic network system}. Proposition \ref{Prop: edge creating dynamics} below applies to memory weight dynamics but we prove here a slightly more general version. 

\begin{proposition} \label{Prop: edge creating dynamics}
    Make Assumption \ref{Assumption: phi > c} and assume also that there exists a constant $c_f$ such that $f^+(w)_{ij} > c_f$ for all $w \in \mathcal{W}_{ij}^0$. Then the population reaches consensus. 
\end{proposition}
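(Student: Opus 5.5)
We aim to show that every edge weight is bounded below by a positive constant after some finite time. We then run a standard contraction argument on the opinion diameter $D(t)$.

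\textbf{Step 1: a uniform lower bound on the weights.} Fix $i\neq j$. By continuity of $f^+$ and compactness of $[0,1]^{N\times N}$, the hypothesis $f^+(w)_{ij}>c_f$ on $\mathcal{W}^0_{ij}$ extends to a neighbourhood. Concretely, by the Lipschitz continuity in Assumption \ref{Assumption group: weight dynamics}, there is $\delta>0$ (depending only on $c_f$ and the Lipschitz constant) such that $f^+(w)_{ij}>c_f/2$ whenever $w_{ij}\le\delta$. Likewise $f^-(w)_{ij}$ vanishes on $\mathcal{W}^0_{ij}$, so $f^-(w)_{ij}\le L_f w_{ij}$. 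Hence, whenever $w_{ij}\le \delta$,
\begin{align*}
\frac{dw_{ij}}{dt} \ge c_\phi \frac{c_f}{2} - (1-c_\phi) L_f\, w_{ij},
\end{align*}
using Assumption \ref{Assumption: phi > c}. Choose $\varepsilon=\min\{\delta,\, c_\phi c_f/(4L_f)\}$. If $w_{ij}\le \varepsilon$, then $\dot w_{ij}\ge c_\phi c_f/4>0$. So each weight enters $[\varepsilon,1]$ within time at most $T=4\varepsilon/(c_\phi c_f)$ and cannot leave it afterwards. Thus for $t\ge T$ we have $w_{ij}(t)\ge\varepsilon$ for all $i,j$.

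\textbf{Step 2: contraction of the diameter.} For $t\ge T$, every coefficient $a_{ij}=w_{ij}\phi(x_j-x_i)/k_i$ satisfies $a_{ij}\ge \varepsilon c_\phi/N$, since $k_i\le N$. Let $i^*$ attain the maximum opinion and $i_*$ the minimum at time $t$. Then
\begin{align*}
\frac{d}{dt}x_{i^*} \le \frac{\varepsilon c_\phi}{N}\sum_j (x_j-x_{i^*})
\quad\text{and}\quad
\frac{d}{dt}x_{i_*} \ge \frac{\varepsilon c_\phi}{N}\sum_j (x_j-x_{i_*}).
\end{align*}
These hold because every term in each sum has a fixed sign, and the coefficients are at least $\varepsilon c_\phi/N$. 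Subtracting, the sums combine to $\sum_j (x_{i_*}-x_{i^*})$, so the upper Dini derivative of $D$ satisfies $D'\le -\varepsilon c_\phi D$. (The set of maximisers may switch, which is handled in the standard way via Dini derivatives, or by Danskin's lemma for a maximum over finitely many $C^1$ functions.) Grönwall's inequality then gives $D(t)\le D(T)e^{-\varepsilon c_\phi(t-T)}\to0$.

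\textbf{Step 3: the limit is a point.} Opinions stay in the convex hull of the current opinions, which is nested and shrinking to zero diameter. So all $x_i$ converge to a common limit, which is consensus.

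The main obstacle is Step 1. We must make sure the positivity of $f^+$ near $\mathcal{W}^0_{ij}$ genuinely dominates the removal term $f^-$ in a neighbourhood. This uses the Lipschitz bounds and the vanishing of $f^-$ on $\mathcal{W}^0_{ij}$. If those bounds are unavailable, we would instead argue only that $\liminf w_{ij}>0$ via a comparison ODE, which still suffices for Step 2 after some finite time.
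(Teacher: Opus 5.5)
Your proposal is correct and follows essentially the same route as the paper. Lipschitz continuity of $f^{\pm}$ near $\mathcal{W}^0_{ij}$ gives a threshold below which every $w_{ij}$ strictly increases, hence a uniform positive lower bound on all weights after a finite time; the extremal opinions then give an exponential contraction of $D(t)$. Your execution is in fact slightly cleaner: you make the entry time explicit, and your combined estimate handles the normalisation $1/k_i \geq 1/N$ correctly, whereas the paper's intermediate bound $\frac{1}{k_m}\sum_{j}(x_j - x_m) \geq D(t)$ is not valid as written (the conclusion still holds, with a smaller rate).
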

\begin{proof}
    As $f^+$ is Lipschitz continuous and by assumption $w\in \mathcal{W}_{ij}^0$ implies that $f^+(w)_{ij}>c_f$, there exists a constant $w^+>0$ such that $f^+(w)_{ij}>\frac{1}{2}c_f$ for all $w_{ij}<w^+$. 
    As $f^-$ is also Lipschitz continuous and $w\in \mathcal{W}_{ij}^0$ implies that $f^-(w)_{ij}=0$, there exists a constant $w^->0$ such that for all $w_{ij}<w^-$
    \begin{align*}
        f^-(w)_{ij} < \frac{c_\phi}{2(1 - c_\phi)} \, c_f.
    \end{align*} 
    Note that these fractions of $c_f$ are chosen to simplify the inequalities below. 
    
    Let $w^* = \min\{w^+,w^-\}$. Then for $w_{ij} < w^*$ we have 
    \begin{align*}
        \frac{dw_{ij}}{dt}
        &= \phi\big(x_j - x_i\big)\, f^+(w)_{ij} - \Big(1 - \phi\big(x_j - x_i\big)\Big)\, f^-(w)_{ij} \\
        &\geq c_\phi\, f^+(w)_{ij} - (1-c_\phi) f^-(w)_{ij} \\
        &\geq \frac{c_\phi c_f}{2} - (1-c_\phi) f^-(w)_{ij} \\
        &> \frac{c_\phi}{2} c_f - (1-c_\phi) \frac{c_\phi}{2(1 - c_\phi)} \, c_f 
        = 0 \,.
    \end{align*}
    Therefore $w_{ij}$ has positive derivative whenever $w_{ij}<w^*$ and so weights beginning above $w^*$ will remain above it and those beginning below will increase towards it, so there exists a finite time $t^*$ such that $w_{ij}(t)>\frac{1}{2}w^*$ for all $t\geq t^*$. 

    For any $t>t^*$, consider the individual with the minimum opinion, denoted $x_m$. The change in their opinion is given by
    \begin{align*}
        \frac{dx_m}{dt} 
        &= \frac{1}{k_m} \sum_{j=1}^N w_{mj}\,\phi(x_j - x_m)\,(x_j - x_m) 
        \geq \frac{c_\phi w^*}{2}\frac{1}{k_m} \sum_{j=1}^N (x_j - x_m) 
        \geq \frac{c_\phi w^*}{2} D(t).
    \end{align*}
    Similarly 
    \begin{align*}
        \frac{dx_M}{dt} 
        &\leq -\frac{c_\phi w^*}{2} D(t) \,.
    \end{align*}
    Hence 
    \begin{align*}
        \frac{d}{dt}D(t) \leq -c_\phi w^* D(t) \,,
    \end{align*}
    thus $D(t)\rightarrow0$ and the population reaches consensus. 
\end{proof}

We now apply this to memory weight dynamics and extend the result to describe the long-term network behaviour. 

\begin{corollary} \label{Cor: Exp, memory weight dynamics}
    Under memory weight dynamics and Assumption \ref{Assumption: phi > c}, the population reaches consensus, and the network becomes fully connected with edge weight $\phi(0)>0$.  
\end{corollary}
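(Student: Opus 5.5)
The plan is to check that memory weight dynamics satisfy the hypotheses of Proposition \ref{Prop: edge creating dynamics}. Consensus then follows at once, and the network statement will come from a linear comparison argument on each edge weight. For memory dynamics we have $f^+(w)_{ij} = 1-w_{ij}$ and $f^-(w)_{ij} = w_{ij}$. Both are non-negative on $[0,1]$ and Lipschitz, and they satisfy the conditions of Assumption \ref{Assumption group: weight dynamics}. For $w\in\mathcal{W}_{ij}^0$ we get $f^+(w)_{ij}=1$, so any $c_f\in(0,1)$ works. Proposition \ref{Prop: edge creating dynamics} therefore applies, giving $D(t)\to0$, and in particular $|x_j(t)-x_i(t)|\le D(t)\to0$ for every pair $i,j$.

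For the network, I will write the memory dynamics \eqref{eqn: memory weight dynamics} as
\begin{align*}
    \frac{d}{dt}\big(w_{ij}-\phi(0)\big) = -\big(w_{ij}-\phi(0)\big) + g_{ij}(t), \qquad g_{ij}(t) = \phi(x_j-x_i)-\phi(0).
\end{align*}
By Lipschitz continuity of $\phi$ (Assumption \ref{Assumption group: standard}), $|g_{ij}(t)|\le L_\phi D(t)\to0$. Variation of constants then gives
\begin{align*}
    w_{ij}(t)-\phi(0) = e^{-t}\big(w_{ij}(0)-\phi(0)\big) + \int_0^t e^{-(t-s)} g_{ij}(s)\,ds.
\end{align*}

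The only step needing any care is showing that this convolution tends to zero. This is standard. Given $\varepsilon>0$, choose $T$ with $|g_{ij}(s)|<\varepsilon$ for $s\ge T$, and split the integral at $T$. The part over $[0,T]$ is bounded by $2e^{-(t-T)}$, since $|g_{ij}|\le 1$ because $\phi$ takes values in $[0,1]$. The part over $[T,t]$ is bounded by $\varepsilon$. Hence $\limsup_{t\to\infty}|w_{ij}(t)-\phi(0)|\le\varepsilon$ for every $\varepsilon>0$. In fact the proof of Proposition \ref{Prop: edge creating dynamics} gives exponential decay of $D$, which would also give an explicit exponential rate here, though that is not needed.

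Finally, $w_{ij}\to\phi(0)$ holds for all $i\neq j$, and $\phi(0)>0$ by Assumption \ref{Assumption group: standard}. So the limiting network is fully connected with uniform edge weight $\phi(0)$, while $w_{ii}=1$ remains fixed, which completes the corollary.
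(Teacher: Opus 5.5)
Your proposal is correct and follows the paper's own proof. You apply Proposition \ref{Prop: edge creating dynamics} using $f^+(w)_{ij}=1$ on $\mathcal{W}_{ij}^0$ to get consensus, then pass to the limit in the variation-of-constants formula for $w_{ij}$; the only difference is that you spell out the $\varepsilon$--$T$ splitting of the convolution integral, which the paper leaves implicit.
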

\begin{proof}
    Firstly note that for $w \in \mathcal{W}_{ij}^0$, $f^+(w)_{ij} = 1$, so we may immediately apply Proposition \ref{Prop: edge creating dynamics} to guarantee consensus. Thus we will have that $\phi(x_j - x_i) \rightarrow \phi(0)$ for all $i,j$ and so the solution to \eqref{eqn: memory weight dynamics} follows
    \begin{align*}
        w_{ij}(t) = e^{-t} w_{ij}(0) + \int_0^t e^{s-t} \phi\big( x_j(s) - x_i(s) \big) \, ds \rightarrow \phi(0) \,.
    \end{align*}
\end{proof}

\begin{remark}
    Under the assumptions of Proposition \ref{Prop: edge creating dynamics} it follows that $\phi(x_j - x_i)\rightarrow \phi(0)$. However, regarding other admissible weight dynamics, it is not possible to deduce anything about the long-term behaviour of $w$ in general (except that $w_{ij}$ will eventually exceed $w^*$) as there may be other fixed points in the weight dynamics. 
\end{remark}

\begin{proposition}
    Under memory weight dynamics and Assumption \ref{Assumption: phi > c}, the following bound holds on the diameter $D(t)$, 
    \begin{align*}
        D(t) \leq \exp\Big( - c^2 (t + e^{-t} - 1) \Big) D(0) \,.
    \end{align*}
\end{proposition}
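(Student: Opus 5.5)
Here $c$ denotes $c_\phi$ from Assumption \ref{Assumption: phi > c}. The plan is to repeat the extremal-opinion argument from the proof of Proposition \ref{Prop: edge creating dynamics}, but with an explicit, time-dependent lower bound on the weights coming from the memory dynamics. The bound is then integrated rather than using the crude constant $\tfrac12 w^*$.

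The first step is a lower bound on the weights. From \eqref{eqn: memory weight dynamics} and $w_{ij}(0)\geq 0$, the variation of constants formula used in Corollary \ref{Cor: Exp, memory weight dynamics} gives
\begin{align*}
    w_{ij}(t) = e^{-t}w_{ij}(0) + \int_0^t e^{s-t}\phi\big(x_j(s)-x_i(s)\big)\,ds \geq c\,(1-e^{-t}),
\end{align*}
for all $i\neq j$. This holds for $w_{ii}=1$ as well, since $c\le 1$.

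The second step is the differential inequality for $D$. Let $x_m$ and $x_M$ be the minimal and maximal opinions at time $t$. Since $x_j-x_m\geq 0$ for every $j$, we get
\begin{align*}
    \frac{dx_m}{dt} \geq \frac{1}{k_m}\sum_j c(1-e^{-t})\,c\,(x_j-x_m).
\end{align*}
The key point is to bound $\frac{1}{k_m}\sum_j(x_j-x_m)$ from below by a multiple of $D$. Using $k_m\le N$ and the term $j=M$ alone gives $\frac{1}{k_m}\sum_j(x_j-x_m)\geq D/N$. The symmetric estimate for $x_M$ then yields $\dot D \leq -\tfrac{2}{N}c^2(1-e^{-t})D$. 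This is valid almost everywhere, handling the switching of the argmin/argmax in the usual way via the upper Dini derivative.

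Integrating, and using $\int_0^t(1-e^{-s})\,ds = t+e^{-t}-1$, gives Gr\"onwall's bound $D(t)\le \exp(-\kappa c^2(t+e^{-t}-1))D(0)$. Here $\kappa$ is the constant from the averaging step. This explains exactly the shape $t+e^{-t}-1$ in the statement.

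The main obstacle is the constant $\kappa$: the statement has $\kappa=1$, with no factor $N$. Getting $\kappa\ge1$ needs a sharper averaging argument than $k_m\le N$. I would first try to use that the $j=m$ term has weight $w_{mm}=1$ but contributes zero, together with summing the contributions of the minimum and the maximum. Alternatively, I would compare weighted averages, noting that both $x_m$ and $x_M$ are pulled toward convex combinations of all opinions whose weights are bounded below by $c^2(1-e^{-t})$ relative to the total. If this cannot be sharpened, the honest version of the result carries an $N$-dependent constant (or the convention $k_i\le$ const), and I would state that explicitly.
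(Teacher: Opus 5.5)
Your weight bound $w_{ij}(t)\ge c(1-e^{-t})$ and the final integration $\int_0^t(1-e^{-s})\,ds=t+e^{-t}-1$ are exactly what the paper uses; the paper obtains the rest by citing Section~2.1 of Motsch--Tadmor. As written, however, your argument only yields $D(t)\le\exp\bigl(-\tfrac{2}{N}c^2(t+e^{-t}-1)\bigr)D(0)$. Your suspicion that the stated constant needs an $N$-dependence is wrong. You lose the factor $N$ by bounding $\dot x_m$ and $\dot x_M$ separately. Each of $\frac{1}{k_m}\sum_j(x_j-x_m)$ and $\frac{1}{k_M}\sum_j(x_M-x_j)$ can be as small as about $D/N$, but not at the same time: if almost everyone sits at $x_m$, the maximiser feels almost all of $D$.

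The missing step is to pair the $j$-th terms of the two sums, which is what the Motsch--Tadmor coefficient $\eta_A=\min_{p,q}\sum_k\min\{a_{pk},a_{qk}\}$ encodes. Set $a_{pj}=w_{pj}\phi(x_j-x_p)/k_p$. Then $a_{pj}\ge c^2(1-e^{-t})/N$ for $j\neq p$, using $k_p\le N$, and
\begin{align*}
    \frac{dx_m}{dt}-\frac{dx_M}{dt}
    = \sum_{j=1}^N \Big( a_{mj}\,(x_j-x_m) + a_{Mj}\,(x_M-x_j) \Big)
    \geq \frac{c^2(1-e^{-t})}{N} \sum_{j=1}^N \Big( (x_j-x_m) + (x_M-x_j) \Big)
    = c^2(1-e^{-t})\, D .
\end{align*}
The diagonal terms are harmless because they multiply zero, and each bracket on the right equals $D$ exactly. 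Hence $\dot D\le -c^2(1-e^{-t})D$, in the Dini sense you describe, and your Gr\"onwall step gives the statement with $\kappa=1$. Your observation about $w_{mm}=1$ plays no role. The $N$ is recovered by summing $N$ pairs, each worth at least $c^2(1-e^{-t})D/N$.
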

\begin{proof}
    This is a direct application of the results in Section 2.1 of \cite{motsch_heterophilious_2014}, using the fact that 
    \begin{align*}
        w_{ij}(t) = e^{-t} w_{ij}(0) + \int_0^t e^{s-t} \phi\big( x_j(s) - x_i(s) \big) \, ds \geq e^{-t} w_{ij}(0) + c(1 -e^{-t}) \geq c(1 -e^{-t})\,,
    \end{align*}
    to give a lower bound on the strength of interaction terms. 
\end{proof}
This is very similar to the bound obtainable directly from the proof of Corollary \ref{Cor: Exp, memory weight dynamics} as for memory weight dynamics $c_\phi = w^* = c$. In both cases convergence may be significantly faster than this bound as it accounts only for the smallest possible value of $\phi$. 

In the case of $R=2$ and memory weight dynamics, there is a consistently high level of interaction between all individuals, making convergence to consensus and subsequent network converge quite predictable. We therefore move to consider more general weight dynamcis that allow more complex behaviours. 

\subsection{Logistic weight dynamics}

This section begins with a general result for interaction functions with full support, provides several results specific to logistic weight dynamics, and finally discusses a scenario that highlights a barrier to proving results about the positions of opinion clusters.  

\begin{figure}[ht!]
    \centering
    \begin{subfigure}{\textwidth}
        \centering
        \includegraphics[width=0.75\linewidth]{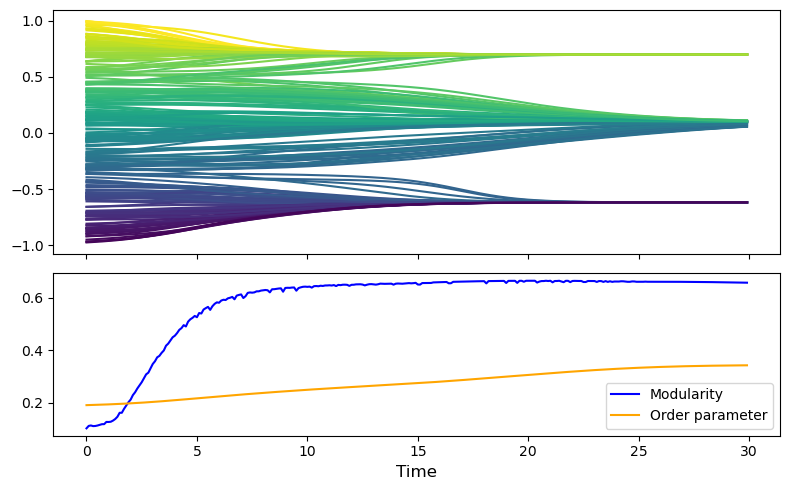}
        \caption{The top panel shows opinion trajectories over time, coloured according to initial opinion. Three distinct clusters form as weight dynamics removes their connecting edges. The bottom panel shows the order parameter and network modularity over time. The modularity rises sharply at the start of the dynamics as the network separates into communities.}
    \end{subfigure}
    \begin{subfigure}{\textwidth}
        \centering
        \includegraphics[width=0.75\linewidth]{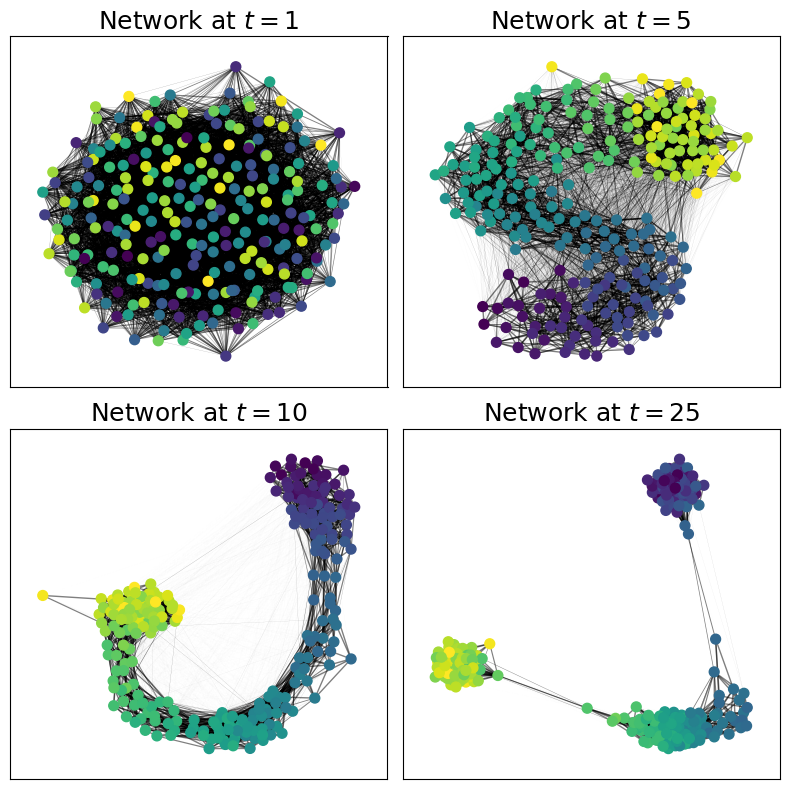}
        \caption{Snapshots of network structure. Edge width is drawn according to edge weight; nodes are coloured according to individuals' initial opinions. The network first becomes ordered according to individuals' opinions before separating into three communities. Note that nodes are positioned according to the network structure, rather than an embedding into some physical space.}
    \end{subfigure}
    \caption{Demonstration of opinion clustering under logistic weight dynamics. In the absence of weight dynamics the interaction function having full support would lead to consensus. Here, the network is instead reorganised according to individuals' opinions, leading to the formation of communities that are then mirrored in the formation of opinion clusters.}
    \label{fig:exp_logistic_clusters}
\end{figure}

Firstly we clarify that unlike in the case of memory weight dynamics, consensus is not guaranteed under logistic weight dynamics. If $\phi(x_j-x_i)<1/2$ the edge weight $w_{ij}$ will decrease, potentially separating the (initially connected) network into disconnected components. An example of this behaviour is shown in Figure \ref{fig:exp_logistic_clusters} where $\phi$ is an exponential interaction function, initial opinions are chosen uniformly at random and the initial network is an Erdos-Renyi random network with $p=0.25$. The top panel shows opinion trajectories over time, coloured according to initial opinions, which demonstrates a split into three opinion clusters. While the order parameter \eqref{eqn: order parameter} increases gradually over time there is an earlier, sharper increase in the network modularity. Modularity describes the extent of community structure in a network by calculating the difference between the number of within-group edges from the expected number in an equivalent network with edges placed at random \cite{newman2006modularity}. The increase in modularity indicates the emergence of community structure in the network and occurs alongside the formation of opinion clusters. 

This structural change can be seen in the network snapshots in the bottom part of Figure \ref{fig:exp_logistic_clusters}. The initially random network organises, effectively ordering itself according to individuals initial opinions. This behaviour is different to the immediate formation of clusters often observed under rewiring and is more comparable to the behaviour observed in \cite{djurdjevac2024co} in which individuals are positioned in a `social space'. This change in network structure creates a positive feedback loop with opinion clustering that, over time, leads to the formation of distinct opinion clusters and corresponding network communities. Note that as logistic weight dynamics does not create new edges these communities are densely, but not fully, connected. This setup, in particular the transition from consensus to clustering, is studied through simulations in Section 3.3 of \cite{nugent2023evolving}. Below we provide an analytic result confirming the observed cluster formation for a broad class of weight dynamics. 

\begin{theorem} \label{Thm: Exp, logistic weight dynamics}
    Under Assumption \ref{Assumption: phi > c} and any weight dynamics satisfying Assumption \ref{Assumption group: weight dynamics}, the population separates into finitely many opinions clusters with all edges between clusters vanishing. That is, for any two individuals $i,j$, at least one of $x_j - x_i$ or $w_{ij}$ converges to zero as $t\rightarrow \infty$. 
\end{theorem}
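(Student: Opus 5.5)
The plan is an energy/dissipation argument combined with Barbalat's lemma. I will use $\phi \geq c_\phi$ to reduce the claim to showing that $w_{ij}(x_j-x_i)^2 \to 0$ for every pair. Throughout I will assume, as in \cite{nugent2023evolving}, that the network stays symmetric. This holds if $w(0)$ is symmetric and $f^\pm(w)_{ij}=f^\pm(w)_{ji}$, because $\phi$ is even. If symmetry is not available, I fall back on the extremal-opinion argument in the last paragraph.

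\textbf{Step 1 (a priori bounds).} All opinions stay in $[x_1(0),x_N(0)]\subset[-1,1]$, the weights stay in $[0,1]$, and $1\le k_i\le N$. Hence $\dot x_i$ and $\dot w_{ij}$ are uniformly bounded. Differentiating \eqref{Eqn: general dynamic network system x_i} once more and using the Lipschitz continuity of $\phi$ and $f^\pm$ shows that $\ddot x_i$ is also uniformly bounded. The same then holds for the derivative of $g_{ij}(t):=w_{ij}\phi(x_j-x_i)(x_j-x_i)^2$, so each $g_{ij}$ is uniformly continuous in $t$.

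\textbf{Step 2 (dissipation).} Consider the weighted energy $E(t)=\sum_i k_i x_i^2$. Using symmetry of $w$,
\begin{align*}
\frac{dE}{dt} = \sum_i \dot k_i\, x_i^2 \;-\; \sum_{i,j} w_{ij}\,\phi(x_j-x_i)\,(x_j-x_i)^2 .
\end{align*}
If the term $\sum_i \dot k_i x_i^2$ can be controlled, integrating gives $\int_0^\infty \sum_{i,j} g_{ij}\,dt<\infty$. Barbalat's lemma, with the uniform continuity from Step 1, then gives $g_{ij}(t)\to0$. Since $\phi\ge c_\phi$, this yields $w_{ij}(x_j-x_i)^2\to0$, i.e.\ $\min\{w_{ij},|x_j-x_i|\}\to 0$ along the whole trajectory. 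Finitely many clusters then follows because there are only $N$ individuals. Individuals joined by edges that do not vanish share a limiting opinion, and edges between distinct limiting opinions must vanish.

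\textbf{Main obstacle: the term $\sum_i\dot k_i x_i^2$.} It has no sign, and $\dot k_i$ need not be integrable in general, since weights can oscillate. My first attempt is to remove it by translation invariance. Replace $x_i$ by $x_i-\bar x$ with $\bar x$ the $k$-weighted mean. Alternatively, use the symmetric energy $\tfrac12\sum_{i,j} w_{ij}(x_j-x_i)^2$, whose time derivative contains $\sum_{ij}\dot w_{ij}(x_j-x_i)^2$. One can then check that the $f^+$ part of $\dot w_{ij}$ is bounded by a multiple of $(1-w_{ij})$ and the $f^-$ part by a multiple of $w_{ij}$, using the Lipschitz bounds and Assumption \ref{Assumption group: weight dynamics}.

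\textbf{Fallback if no clean energy exists.} The fallback is a peeling argument on extremal opinions. The minimum opinion $x_{\min}(t)$ is nondecreasing and bounded, so it converges. Its right derivative satisfies $\dot x_{\min}\ge \frac{c_\phi}{N}\sum_j w_{mj}(x_j-x_m)\ge0$, and this derivative is integrable. Barbalat's lemma then forces $w_{mj}(x_j-x_m)\to0$ for the current minimiser $m$. Next:
\begin{itemize}
\item Group the individuals whose opinions converge to $\lim x_{\min}$ into the first cluster.
\item Show that the edges from this cluster to the rest vanish.
\item Repeat the argument on the remaining individuals, for whom the next-smallest opinion plays the same role up to vanishing perturbations from the first cluster.
\end{itemize}
The delicate points in this route are that the identity of the minimiser can switch over time, and that the perturbations coming from already-peeled clusters must be integrable.
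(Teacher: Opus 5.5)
Your primary route does not close, so the proof rests entirely on the fallback. In Step 2 nothing controls $\sum_i \dot k_i x_i^2$, and neither suggested fix does. Recentring at the $k$-weighted mean $\bar x$ gives $\frac{d}{dt}\sum_i k_i(x_i-\bar x)^2=\sum_i \dot k_i (x_i-\bar x)^2-\sum_{i,j} w_{ij}\phi(x_j-x_i)(x_j-x_i)^2$, and the first term still has no sign. For $\tfrac12\sum_{i,j}w_{ij}(x_j-x_i)^2$, the growth part of $\dot w_{ij}$ contributes $+\tfrac12\sum_{i,j}\phi(x_j-x_i)f^+(w)_{ij}(x_j-x_i)^2$. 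This term is positive and, by your own Lipschitz bound, can be as large as $L(1-w_{ij})(x_j-x_i)^2$, so nothing in the derivative absorbs it when $w_{ij}$ is small. The opinion part $-2\sum_i \dot x_i\sum_j w_{ij}(x_j-x_i)$ is not even sign-definite unless the square is replaced by $\Phi$ with $\Phi'(r)=r\phi(r)$. For essentially arbitrary Lipschitz $f^\pm$ no such Lyapunov function should be expected, and the paper's proof works with the minimum opinion instead. Moreover, even granting $w_{ij}(x_j-x_i)^2\to0$, you only get $\min\{w_{ij},|x_j-x_i|\}\to0$. That is weaker than the theorem, since the two factors could take turns being small. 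Your remark about individuals sharing a limiting opinion presupposes that every $x_i$ converges, which Step 2 does not show.

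The fallback is the paper's strategy, but you leave open exactly the steps that make up the proof. First, Barbalat's lemma cannot be applied to $g=D^+x_{\min}$, because $g$ jumps when the minimiser changes. The paper's observation is that these jumps are only downward (a new minimiser can take over only if it moves no faster than the old one), and that $g$ is uniformly Lipschitz between jumps. So if $g(t_n)>\varepsilon$ along $t_n\to\infty$, then $g>\varepsilon/2$ on an interval of fixed length ending at each $t_n$. Each such interval raises the nondecreasing $x_{\min}$ by a fixed amount, contradicting boundedness. Your Step 1 makes this rigorous: $|\ddot x_i|\le L$ makes $x_{\min}(t)-\tfrac{L}{2}t^2$ a minimum of concave functions, whence $g(s)-g(t)\le L(s-t)$ for $s>t$. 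Second, you give no argument that the edges leaving $(G1)$ vanish, nor that the $(G1)$-terms in $\dot x_{\tilde m}$ disappear for the minimiser $\tilde m$ of $(G2)$. The paper splits $\dot x_i=g_1+g_2$ into within- and cross-group sums and gets $w_{ij}\to0$ for $i\in(G1)$, $j\in(G2)$ from Barbalat on $g_2$. It then uses this to discard the $(G1)$-terms in $\dot x_{\tilde m}$. But those terms involve $w_{\tilde m j}$ with $j\in(G1)$, so symmetry is used there, and your claim that the fallback covers the non-symmetric case is wrong. In fact the statement fails without symmetry. Take $N=3$, $\phi\equiv\tfrac12$, $w_{12},w_{32}$ decaying exponentially from small positive values, and $w_{13}=w_{31}=0$. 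Let $(w_{21},w_{23})$ run around a periodic orbit in $(0,1)^2$ along which $w_{21}/w_{23}$ varies (any Lipschitz field pointing weakly into the cube has the form $\tfrac12(f^+-f^-)$). Then $x_1$ and $x_3$ converge to distinct limits while $x_2$ oscillates forever, so neither $w_{21}$ nor $x_1-x_2$ tends to zero. Your integrability worry is also justified: $g_1\to0$ alone does not make $\int_0^\infty g_1\,dt$ finite, so the paper's deduction of $\int_0^\infty g_2\,dt<\infty$ needs more than it states. For symmetric weights, one clean way to finish is the cut-balanced consensus theorem of Hendrickx and Tsitsiklis. The coefficients $a_{ij}=w_{ij}\phi(x_j-x_i)/k_i$ satisfy $a_{ij}\le Na_{ji}$, so every $x_i$ converges and $\int_0^\infty a_{ij}\,dt<\infty$ whenever the limits of $x_i$ and $x_j$ differ. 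Since $\phi\ge c_\phi$ and $k_i\le N$, this gives $\int_0^\infty w_{ij}\,dt<\infty$, and $w_{ij}\to0$ by Barbalat because $\dot w_{ij}$ is bounded.
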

\begin{proof}    
    Consider the individual with the minimum opinion $x_m(t)$. This value is non-decreasing and bounded above by $x_N(0)$, so must converge to some value as $t\rightarrow\infty$. Denote this value $x_m^*$. Define $g(t)$ by
    \begin{align*}
        \frac{dx_m}{dt} = \frac{1}{k_m} \sum_{j=1}^N w_{mj}\,\phi(x_j - x_m)\,(x_j - x_m) =: g(t)\,.
    \end{align*}
    Note that $g$ is positive and is not necessarily continuous in time. Specifically, if individual $i$ currently holds the minimum opinion and $x_i$ is increasing quickly it may cross the opinion of another individual $j$, causing a discontinuous drop in $g(t)$ as the identity of the individual with the minimum opinion changes. Such discontinuities must always decrease $g(t)$, as the minimum could never catch up with an opinion that was increasing faster than it. Moreover, in the periods between these discontinuities $g(t)$ is Lipschitz continuous as all of $x,w,k$ and $\phi$ are also Lipschitz continuous (as a consequence of Assumption \ref{Assumption group: standard}).

    We claim that $g(t) \rightarrow0$. Assuming otherwise, then as $g(t)\geq0$ it must take values above some $\varepsilon>0$ infinitely often. As $g(t)$ is continuous when increasing, it must at some earlier time pass through $\varepsilon/2$. Moreover, as it is Lipschitz when increasing this time difference must be above some fixed $s>0$. Hence each excursion above $\varepsilon$ contributes at least $s \varepsilon/2$ towards the value of $x_m$ and so $x_m \rightarrow \infty$. This is a contradiction as $x_m$ is bounded above by $x_N(0)$, hence
    \begin{align*}
        \frac{1}{k_m} \sum_{j=1}^N w_{mj}\,\phi(x_j - x_m)\,(x_j - x_m) \rightarrow 0 \,.
    \end{align*}
    As all terms in this sum are positive (and $k_m$ is bounded above by $N$) we must have that for all $j$, 
    \begin{align*}
        w_{mj}\,\phi(x_j - x_m)\,(x_j - x_m) \rightarrow 0 \,.
    \end{align*}
    As $\phi(x_j - x_i) > c$, either $x_j \rightarrow x_m^*$  or $x_j \nrightarrow x_m^*$ and so $w_{mj} \rightarrow 0$. Denote by $(G1)$ the group of individuals for which $x_j \rightarrow x_m^*$ and by $(G2)$ those for whom $x_j \nrightarrow x_m^*$ and so $w_{mj} \rightarrow 0$. These groups partition the population. 
    
    As the minimum opinion converges to $x_m^*$ there must be at least one individual in $(G1)$, but $(G2)$ may be empty. If $(G2)$ is indeed empty then we have consensus, meaning $x_j-x_i\rightarrow0$ for all $i,j$, so we are done. 
    
    Hence assume $(G2)$ is non-empty. Let $i\in(G1)$ and consider
    \begin{align*}
        \frac{dx_i}{dt} = \frac{1}{k_i} \bigg( \sum_{j\in(G1)} w_{ij}\,\phi(x_j - x_i)\,(x_j - x_i) + \sum_{j\in(G2)} w_{ij}\,\phi(x_j - x_i)\,(x_j - x_i) \bigg) =: g_1(t) + g_2(t)\,.
    \end{align*}
    It follows immediately that $g_1(t) \rightarrow 0$ as $x_j-x_i\rightarrow0$ for all $j\in(G1)$. As $i\in (G1)$, $x_i$ has a limit as $t\rightarrow\infty$, which implies that 
    \begin{align*}
        \int_0^\infty \frac{dx_i(s)}{dt} \, ds < \infty \,.
    \end{align*}
    As we know $g_1(t)\rightarrow0$, it follows that 
    \begin{align*}
        \int_0^\infty g_2(t) \, dt < \infty \,.
    \end{align*}
    Moreover $g_2(t)$ is uniformly continuous, hence by Barablat's lemma \cite{barbalat1959systemes} $g_2(t)\rightarrow0$. Thus for all $j\in(G2)$, 
    \begin{align*}
        w_{ij}\,\phi(x_j - x_i)\,(x_j - x_i) \rightarrow 0 \,.
    \end{align*}
    As $\phi > c$ and $x_j \nrightarrow x_m^*$ the only remaining possibility is that $w_{ij} \rightarrow 0$. To summarise, we have argued here that for all $i\in (G1)$ and $j \in (G2)$, $w_{ij} \rightarrow 0$. 
        
    Considering next the minimum element of $(G2)$, denoted $x_{\Tilde{m}}$ we again separate
    \begin{align*}
        \frac{dx_{\Tilde{m}}}{dt} = \frac{1}{k_{\Tilde{m}}} \bigg( \sum_{j\in(G1)} w_{\Tilde{m},j}\,\phi(x_j - x_{\Tilde{m}})\,(x_j - x_{\Tilde{m}}) + \sum_{j\in(G2)} w_{\Tilde{m},j}\,\phi(x_j - x_{\Tilde{m}})\,(x_j - x_{\Tilde{m}}) \bigg) =: g_1(t) + g_2(t)\,.
    \end{align*}
    As $w_{ij}\rightarrow0$ for all $i\in(G1)$ and $j\in(G2)$, $g_1(t)\rightarrow0$ and using similar arguments to those above for $g$ we must also have $g_2(t)\rightarrow0$. Denote the limiting value of $x_{\Tilde{m}}$ by $x_{\Tilde{m}}^*$. Again we may separate $(G2)$ into two populations $(G21)$ and $(G22)$ with individuals in the former converging to $x_{\Tilde{m}}^*$ and those in the latter not doing so. As before this implies $w_{ij}\rightarrow0$ for all $i\in(G21)$ and $j\in(G22)$. 

    We then apply the whole argument to $(G22)$ and repeat. As there are a finite number of individuals we will eventually have that some $(G\dots2)$ group will be empty and all individuals in the population have been covered. All individuals have thus been separated using this procedure into opinion clusters. Moreover, we have shown that all edges between clusters must vanish as required. 
\end{proof}

Note that Theorem \ref{Thm: Exp, logistic weight dynamics} also implies that $x_i(t)$ has a limit as $t\rightarrow\infty$ for all $i$. This clustering result extends similar results for a fixed network to the dynamic network case. 

We next apply this specifically to logistic weight dynamics. 

\begin{corollary}
    Under Assumption \ref{Assumption: phi > c} and logistic weight dynamics, if $\phi>c\geq 1/2$ then the population reaches consensus and $w_{ij} \rightarrow \mathds{1}\{w_{ij}(0)>0\}$. 
\end{corollary}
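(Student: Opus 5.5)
Under logistic weight dynamics with $\phi>c\geq 1/2$, every existing edge has $2\phi(x_j-x_i)-1\geq 2c-1\geq 0$, so by \eqref{eqn: logistic weight dynamics} each $w_{ij}$ is non-decreasing. Zero weights stay at zero, since $w_{ij}=0$ is invariant. Positive weights remain at least $w_{ij}(0)>0$ for all time. The plan is therefore to use this monotonicity first to obtain consensus, and then to use consensus to drive the positive weights to $1$.

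\textbf{Step 1 (consensus).} Apply Theorem \ref{Thm: Exp, logistic weight dynamics}: for every pair $i,j$, either $x_j-x_i\to0$ or $w_{ij}\to0$. For any edge with $w_{ij}(0)>0$, monotonicity gives $w_{ij}(t)\geq w_{ij}(0)>0$, so $w_{ij}\not\to0$ and hence $x_j-x_i\to0$. Assumption \ref{Assumption group: standard} says the initial network is strongly connected, so any two individuals are joined by a path of such edges. Summing the differences along the path gives $x_j-x_i\to0$ for all $i,j$, which is consensus. Alternatively, one can bypass the theorem: take $\delta=\min\{w_{ij}(0):w_{ij}(0)>0\}$ and run a diameter contraction along paths. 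The theorem is cleaner, so I will use it.

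\textbf{Step 2 (weights).} If $w_{ij}(0)=0$, then $w_{ij}\equiv0=\mathds{1}\{w_{ij}(0)>0\}$. If $w_{ij}(0)>0$, consensus gives $\phi(x_j-x_i)\to\phi(0)$. The main obstacle is that the conclusion $w_{ij}\to1$ needs the growth rate $2\phi-1$ to stay bounded away from $0$.

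If $c>1/2$ this is immediate. Indeed, $\frac{dw_{ij}}{dt}\geq(2c-1)w_{ij}(1-w_{ij})$ with $w_{ij}\geq w_{ij}(0)>0$, so $1-w_{ij}$ decays at least exponentially.

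In the borderline case $c=1/2$ I would use $\phi(0)$ instead. Assumption \ref{Assumption: phi > c} gives the strict inequality $\phi(0)>c=1/2$. By continuity of $\phi$ and consensus, there is a time $T$ after which $2\phi(x_j-x_i)-1\geq\phi(0)-1/2>0$. From then on the same logistic comparison gives $w_{ij}\to1$.

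In fact the $\phi(0)$ argument covers all $c\geq1/2$ uniformly, so I would present it that way. Combining the two cases yields $w_{ij}\to\mathds{1}\{w_{ij}(0)>0\}$.
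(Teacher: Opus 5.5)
Your proposal is correct and follows essentially the same route as the paper: monotonicity of the logistic dynamics under $\phi>c\geq 1/2$ prevents positive weights from vanishing, Theorem \ref{Thm: Exp, logistic weight dynamics} then forces $x_j-x_i\to0$ along every initial edge, strong connectivity gives consensus, and zero weights are invariant. Your Step 2 makes explicit the convergence $w_{ij}\to1$ for $w_{ij}(0)>0$, which the paper's proof leaves implicit (the same reasoning appears in Corollary \ref{Corollary: limit of w_ij (exp,logistic)}); the case split is unnecessary, though, since $\phi$ is continuous on the compact interval $[-2,2]$ with $\phi>c\geq1/2$ pointwise, so $\min\phi>1/2$ and $2\phi-1$ is bounded away from zero from the start.
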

\begin{proof}
    If $\phi > c$ with $c \geq 1/2$, then all non-zero weights will have positive derivative. Thus if $w_{ij}(0)>0$ it is not possible for $w_{ij}\rightarrow0$ and so by Theorem \ref{Thm: Exp, logistic weight dynamics} $x_j - x_i \rightarrow 0$ for all such pairs. As the initial network is assumed to be strongly connected we will therefore have consensus. If $w_{ij}=0$ then this is fixed by the dynamics. 
\end{proof}

\begin{corollary} \label{Corollary: limit of w_ij (exp,logistic)}
    Under Assumption \ref{Assumption: phi > c} and logistic weight dynamics, for all pairs $i,j$, $w_{ij}\rightarrow1$ if all the following conditions are met:
    \begin{enumerate}[label=\alph*)]
        \item $x_j - x_i \rightarrow 0$,
        \item $w_{ij}(0) > 0$,
        \item $2\phi(0)>1$,
    \end{enumerate}
    otherwise $w_{ij}\rightarrow0$. 
\end{corollary}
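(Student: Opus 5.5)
The plan is to combine Theorem \ref{Thm: Exp, logistic weight dynamics} with the explicit structure of the logistic weight dynamics \eqref{eqn: logistic weight dynamics}. The key observation is that, for a single edge, the logistic equation can be integrated along the (already known) opinion trajectories. Whenever $0<w_{ij}(0)<1$ the weight stays in $(0,1)$ for all time, since $0$ and $1$ are fixed points and solutions are unique. The log-odds then satisfies
\begin{align*}
    \frac{d}{dt}\log\frac{w_{ij}}{1-w_{ij}} = 2\phi\big(x_j(t)-x_i(t)\big) - 1,
\end{align*}
so that $\log\frac{w_{ij}(t)}{1-w_{ij}(t)} = \log\frac{w_{ij}(0)}{1-w_{ij}(0)} + \int_0^t \big(2\phi(x_j(s)-x_i(s))-1\big)\,ds$. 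Hence $w_{ij}\to1$ exactly when this integral diverges to $+\infty$, and $w_{ij}\to0$ exactly when it diverges to $-\infty$.

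I will then split into cases according to which condition fails.
\begin{enumerate}[label=\roman*)]
    \item If $w_{ij}(0)=0$, the weight is fixed at zero by the dynamics, so $w_{ij}\to0$.
    \item If $x_j-x_i\nrightarrow0$, Theorem \ref{Thm: Exp, logistic weight dynamics} directly gives $w_{ij}\to0$.
    \item If $x_j-x_i\to0$ and $w_{ij}(0)>0$, continuity of $\phi$ gives $2\phi(x_j-x_i)-1\to2\phi(0)-1$.
    \begin{itemize}
        \item If $2\phi(0)>1$, there are $\delta>0$ and $T$ with integrand at least $\delta$ for $t\ge T$. The integral therefore tends to $+\infty$ and $w_{ij}\to1$.
        \item If $2\phi(0)<1$, the same argument with the sign reversed gives $w_{ij}\to0$.
    \end{itemize}
\end{enumerate}
The degenerate initial value $w_{ij}(0)=1$ must be treated separately, since then $w_{ij}\equiv1$. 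By Theorem \ref{Thm: Exp, logistic weight dynamics} this forces $x_j-x_i\to0$, so conditions a) and b) hold. The conclusion $w_{ij}\to1$ then needs c) as well, and I would remark that $w_{ij}(0)=1$ with $2\phi(0)\le1$ is an exception, or is implicitly excluded.

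The main obstacle is the borderline case $2\phi(0)=1$ with $x_j-x_i\to0$ and $0<w_{ij}(0)<1$. Here the integrand tends to zero, and the integral may converge to a finite value, which would give an interior limit of $w_{ij}$. Since $\phi$ is Lipschitz, $|2\phi(x_j-x_i)-1|\le 2L_\phi|x_j-x_i|$, so integrable convergence of $x_j-x_i$ would in fact give a finite limit. To show $w_{ij}\to0$ here, I would first try to exploit the symmetry of $\phi$ and the assumption that $\phi$ attains its maximum at $0$. If $\phi\le\phi(0)=1/2$, the integrand is non-positive, so $w_{ij}$ is non-increasing. I would then try to argue through the opinion equation, for instance via exponential convergence within a cluster, that the integral nonetheless diverges. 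If this fails, the fallback is to state the corollary with $2\phi(0)\neq1$, treating $2\phi(0)=1$ as a degenerate case in which $w_{ij}$ converges but its limit depends on the trajectory.
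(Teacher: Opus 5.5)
Your treatment of the main cases is correct and follows the paper's route. You split on whether $x_j-x_i\to0$, invoke Theorem~\ref{Thm: Exp, logistic weight dynamics} when it does not, and otherwise use that $0$ is a fixed point together with the sign of $2\phi(0)-1$ in the limit. Your log-odds identity simply makes rigorous the paper's one-line ``and so $w_{ij}\to1$ if and only if\dots''. Both exceptions you raise are genuine, and the paper's proof overlooks them. First, $w_{ij}(0)=1$ gives $w_{ij}\equiv1$ even when $2\phi(0)\le1$. Second, in the borderline case $2\phi(0)=1$ the claimed conclusion $w_{ij}\to0$ is actually false.

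So abandon the attempt to prove divergence when $2\phi(0)=1$. The constant $\phi\equiv\tfrac12$ satisfies Assumption~\ref{Assumption: phi > c} and makes the right-hand side of \eqref{eqn: logistic weight dynamics} vanish, so every weight is frozen. For an edge with $0<w_{ij}(0)<1$, Theorem~\ref{Thm: Exp, logistic weight dynamics} then forces $x_j-x_i\to0$. Hence a) and b) hold and c) fails, yet $w_{ij}\equiv w_{ij}(0)\not\to0$.

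Requiring $\phi$ to peak at $0$ does not rescue the claim, and it is not a hypothesis of the corollary anyway. By your own bound $|2\phi(x_j-x_i)-1|\le 2L_\phi|x_j-x_i|$, exponential convergence inside a cluster makes the log-odds integral \emph{finite}, which is the opposite of what you need. Concretely, take two agents with $w_{12}(0)=w_{21}(0)=w_0\in(0,1)$ and $\phi(r)=\tfrac12 e^{-|r|}$. A bootstrap argument shows that if $x_2(0)-x_1(0)$ is small enough, then $w_{12}$ never drops below $w_0/2$, the opinion distance decays exponentially, and $w_{12}$ converges to a limit in $(0,1)$.

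Your fallback is therefore the correct form of the result. Assume $2\phi(0)\neq1$, and exclude $w_{ij}(0)=1$ when $2\phi(0)<1$. For $2\phi(0)=1$ the behaviour of $w_{ij}$ depends on the trajectory. Even your assertion that its limit exists in that case would need a separate argument.
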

\begin{proof}
    If $x_j - x_i \rightarrow 0$ then $\phi(x_j - x_i) \rightarrow \phi(0)$ and so $w_{ij}\rightarrow 1$ if and only if $w_{ij}(0) > 0$ and $2\phi(0)-1>0$ (as $0$ is a fixed point of the logistic weight dynamics). If $x_j - x_i \nrightarrow 0$ then by Theorem \ref{Thm: Exp, logistic weight dynamics} we must have that $w_{ij}\rightarrow0$. 
\end{proof}

\begin{corollary}
    Under Assumption \ref{Assumption: phi > c} and logistic weight dynamics, for a pair of individuals $i$ and $j$, if $w_{ij}(0)=1$ then individuals $i$ and $j$ will lie in the same opinion cluster regardless of the rest of the population's behaviour.
\end{corollary}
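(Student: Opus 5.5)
The plan is to show that an edge with $w_{ij}(0)=1$ is frozen at weight $1$ for all time, and then to invoke the dichotomy of Theorem \ref{Thm: Exp, logistic weight dynamics}.

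\textbf{Step 1: the edge stays at weight $1$.} Under logistic weight dynamics \eqref{eqn: logistic weight dynamics} the right-hand side contains the factor $w_{ij}(1-w_{ij})$. So $w_{ij}\equiv 1$ satisfies the $w_{ij}$ equation for every opinion trajectory $x(t)$, whatever the other edges do. The full system \eqref{Eqn: general dynamic network system} has a Lipschitz right-hand side under Assumptions \ref{Assumption group: standard} and \ref{Assumption group: weight dynamics}, so its solution is unique. I will make this precise as follows. Take the unique solution $(x,w)$ and view $w_{ij}$ as solving the scalar ODE $\dot u = u(1-u)\,h(t)$, where $h(t) = 2\phi(x_j(t)-x_i(t))-1$ is a fixed continuous function of time. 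This scalar ODE is Lipschitz in $u$ uniformly in $t$, and $u\equiv 1$ is a solution with the same initial value. Uniqueness then gives $w_{ij}(t)=1$ for all $t\geq 0$. If the paper treats $w_{ij}$ and $w_{ji}$ as separate variables, the same argument applies to $w_{ji}$ whenever $w_{ji}(0)=1$. It is not needed for the conclusion, since the clustering statement is about the pair $i,j$.

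\textbf{Step 2: apply the dichotomy.} By Theorem \ref{Thm: Exp, logistic weight dynamics}, at least one of $x_j-x_i$ or $w_{ij}$ converges to zero. Step 1 shows $w_{ij}(t)=1$ for all $t$, so $w_{ij}\not\to 0$. Hence $x_j-x_i\to 0$. The same theorem gives that every $x_k$ has a limit and that the clusters are defined by equality of these limits, so $i$ and $j$ lie in the same cluster. The only input about the rest of the population is the general clustering theorem, which holds for any admissible weight dynamics. This is the sense of ``regardless of the rest of the population's behaviour''.

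\textbf{Main obstacle.} The only delicate point is Step 1, justifying that the boundary value $1$ is invariant. This is easy here because the decoupled scalar ODE has a continuous coefficient. It is worth noting that the argument does not need $2\phi(0)>1$, in contrast with Corollary \ref{Corollary: limit of w_ij (exp,logistic)}. There, weight $1$ is reached only asymptotically and the limit of $w_{ij}$ depends on that condition. Here the edge starts on the invariant boundary, so it never decays.
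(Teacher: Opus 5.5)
Your proposal is correct and follows the paper's proof. The paper likewise notes that $w_{ij}=1$ is a fixed point of the logistic dynamics, so $w_{ij}\not\to 0$, and then invokes Theorem \ref{Thm: Exp, logistic weight dynamics} to conclude $x_j-x_i\to 0$; your uniqueness argument for the scalar ODE $\dot u = u(1-u)h(t)$ makes rigorous the invariance of the value $1$, which the paper simply asserts.
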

\begin{proof}
    As $w_{ij}=1$ is a fixed point of the logistic weight dynamics it is not possible for $w_{ij}\rightarrow 0$. Thus by Theorem \ref{Thm: Exp, logistic weight dynamics} we must have that $x_j - x_i \rightarrow 0$, meaning they will lie in the same opinion cluster. 
\end{proof}

\begin{remark} \label{Remark: Not possible to identify cluster locations}
    Unlike a fixed network or the case of memory weight dynamics, for logistic weight dynamics we do not show convergence to consensus as this is not guaranteed unless $\phi>c\geq 1/2$. Moreover, for logistic weight dynamics it is not possible to argue in general about the positions of the opinion clusters. Consider the scenario shown in Figure \ref{fig:diagram}. The population has two large opinion clusters close to each other, with the individuals in these clusters denoted by $\mathcal{C}_1$ and $\mathcal{C}_2$ respectively. Assume that $w(0)$ is connected within each of these clusters. Assume also that there exists a single individual with a much higher opinion, denoted $x_M$. As the only requirement on the initial network structure is that it is connected, it is entirely possible that there are no initial edges, and thus no possible creation of edges, between $\mathcal{C}_1$ and $\mathcal{C}_2$, as they may be connected only via individual $M$. As $x_M$ is extremely far away these connections would decay quickly, causing the limiting network to be disconnected and the two nearby clusters to persist without merging. Note that this setup only requires that the interaction function be smaller than a half after some distance. 
    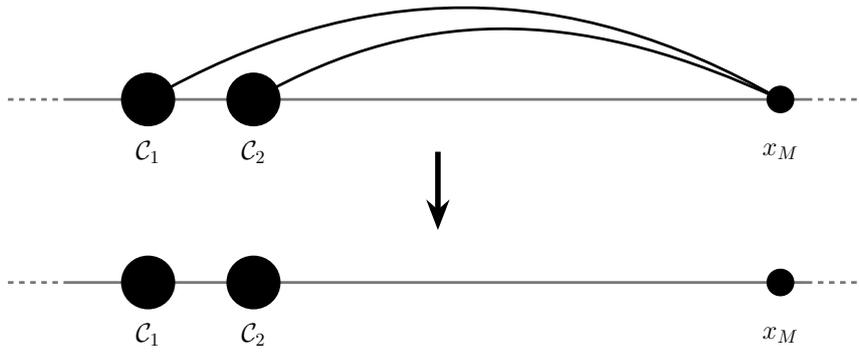
\begin{figure}[!ht]
\centering
\resizebox{.7\textwidth}{!}{%
\begin{circuitikz}
\tikzstyle{every node}=[font=\LARGE]

\draw [line width=1.5pt, color={rgb,255:red,120; green,120; blue,120}, short] (5.5,11) -- (19.5,11);
\draw [line width=1.5pt, color={rgb,255:red,120; green,120; blue,120}, dashed] (5.5,11) -- (4.25,11);
\draw [line width=1.5pt, color={rgb,255:red,120; green,120; blue,120}, dashed] (19.5,11) -- (20.75,11);


\draw [line width=1.5pt,color={rgb,255:red,0; green,0; blue,0}] (7,11) to[out=30,in=150] (19,11);
\draw [line width=1.5pt,color={rgb,255:red,0; green,0; blue,0}] (9,11) to[out=30,in=155] (19,11);

\draw [ fill={rgb,255:red,0; green,0; blue,0} , line width=0.8pt ] (7,11) circle (0.5cm);
\draw [ fill={rgb,255:red,0; green,0; blue,0} , line width=0.8pt ] (9,11) circle (0.5cm);
\draw [ fill={rgb,255:red,0; green,0; blue,0} , line width=0.8pt ] (19,11) circle (0.25cm);

\node [font=\Large] at (7,10) {$\mathcal{C}_1$};
\node [font=\Large] at (9,10) {$\mathcal{C}_2$};
\node [font=\Large] at (19,10) {$x_M$};


\draw [line width=3pt, ->, >=Stealth] (12.5,10) -- (12.5,8.5);


\draw [line width=1.5pt, color={rgb,255:red,120; green,120; blue,120}, short] (5.5,7.5) -- (19.5,7.5);
\draw [line width=1.5pt, color={rgb,255:red,120; green,120; blue,120}, dashed] (5.5,7.5) -- (4.25,7.5);
\draw [line width=1.5pt, color={rgb,255:red,120; green,120; blue,120}, dashed] (19.5,7.5) -- (20.75,7.5);

\draw [ fill={rgb,255:red,0; green,0; blue,0} , line width=0.8pt ] (7,7.5) circle (0.5cm);
\draw [ fill={rgb,255:red,0; green,0; blue,0} , line width=0.8pt ] (9,7.5) circle (0.5cm);
\draw [ fill={rgb,255:red,0; green,0; blue,0} , line width=0.8pt ] (19,7.5) circle (0.25cm);

\node [font=\Large] at (7,6.5) {$\mathcal{C}_1$};
\node [font=\Large] at (9,6.5) {$\mathcal{C}_2$};
\node [font=\Large] at (19,6.5) {$x_M$};

\end{circuitikz}
}%

\caption{Diagram showing the scenario described in Remark \ref{Remark: Not possible to identify cluster locations}. Under logistic weight dynamics two clusters ($\mathcal{C}_1$ and $\mathcal{C}_2$) may exist at any distance apart due to an initial network structure in which they are only connected via a distant individual ($x_M$) from whom they both become disconnected.}
\label{fig:diagram}
\end{figure}
\end{remark}

Theorem \ref{Thm: Exp, logistic weight dynamics} also holds true in the case of Friend-Of-A-Friend (FOAF) weight dynamics introduced in \cite{nugent2023evolving}, as these satisfy Assumption \ref{Assumption group: weight dynamics}. The weight dynamics are given by 
\begin{align*}
    \frac{dw_{ij}}{dt} =  &= \phi(x_j - x_i)\,\big( w_{ij} + N^{-1} (w^2)_{ij} \big)\,(1-w_{ij}) - \big(1 - \phi(x_j - x_i)\big)\,w_{ij}  \,,
\end{align*}
and capture the idea of triadic closure (the creation of a new edge between two individuals that have a mutual connection/friend) that has been used in various models of social network evolution \cite{bianconi2014triadic,klimek2013triadic,huang2018will,asikainen2020cumulative}. 

As in Corollary \ref{Corollary: limit of w_ij (exp,logistic)} we consider two cases. Again if $x_j - x_i \nrightarrow 0$ then we must automatically have that $w_{ij}\rightarrow0$. If $x_j - x_i \rightarrow 0$ then the situation is more complex as it possible to create weights through mutual connections. If $w_{ij}>0$ at any point then we will have that $w_{ij}\rightarrow1$ as $t\rightarrow\infty$ since $\phi(x_j - x_i) \rightarrow 1$. Therefore in this case $w_{ij}\rightarrow0$ if and only if $w_{ij}(t)=0$ for all $t\geq0$, a statement which depends upon the exact trajectory of the whole system. However in general, for a fairly well connected initial network, one would expect the FOAF mechanism to lead to fully connected communities in the network corresponding to opinion clusters. Note that again Remark \ref{Remark: Not possible to identify cluster locations} applies and the locations of clusters cannot be determined. In this case, as demonstrated in Figure \ref{fig:diagram 2}, now place two individuals at $x_M$ so that $\mathcal{C}_1$ and $\mathcal{C}_2$ do not share $x_M$ as a mutual friend. The connections between each cluster and their neighbour at $x_M$ will decay quickly, as the FOAF mechanic will not increase the edge weight, causing the network to be disconnected. Thus the clusters can take any position sufficiently far from $x_M$. 

\begin{figure}[!ht]
\centering
\resizebox{.7\textwidth}{!}{%
\begin{circuitikz}
\tikzstyle{every node}=[font=\LARGE]

\draw [line width=1.5pt, color={rgb,255:red,120; green,120; blue,120}, short] (5.5,11) -- (19.5,11);
\draw [line width=1.5pt, color={rgb,255:red,120; green,120; blue,120}, dashed] (5.5,11) -- (4.25,11);
\draw [line width=1.5pt, color={rgb,255:red,120; green,120; blue,120}, dashed] (19.5,11) -- (20.75,11);

\draw [line width=1.5pt,color={rgb,255:red,0; green,0; blue,0}] (7,11) to[out=30,in=150] (19,11.75);
\draw [line width=1.5pt,color={rgb,255:red,0; green,0; blue,0}] (9,11) to[out=30,in=155] (19,11);

\draw [ fill={rgb,255:red,0; green,0; blue,0} , line width=0.8pt ] (7,11) circle (0.5cm);
\draw [ fill={rgb,255:red,0; green,0; blue,0} , line width=0.8pt ] (9,11) circle (0.5cm);
\draw [ fill={rgb,255:red,0; green,0; blue,0} , line width=0.8pt ] (19,11) circle (0.25cm);
\draw [ fill={rgb,255:red,0; green,0; blue,0} , line width=0.8pt ] (19,11.75) circle (0.25cm);

\draw [line width=1.5pt, color={rgb,255:red,0; green,0; blue,0}, short] (19,11) -- (19,11.75);

\node [font=\Large] at (7,10) {$\mathcal{C}_1$};
\node [font=\Large] at (9,10) {$\mathcal{C}_2$};
\node [font=\Large] at (19,10) {$x_M$};



\draw [line width=3pt, ->, >=Stealth] (12.5,10) -- (12.5,8.5);


\draw [line width=1.5pt, color={rgb,255:red,120; green,120; blue,120}, short] (5.5,7.5) -- (19.5,7.5);
\draw [line width=1.5pt, color={rgb,255:red,120; green,120; blue,120}, dashed] (5.5,7.5) -- (4.25,7.5);
\draw [line width=1.5pt, color={rgb,255:red,120; green,120; blue,120}, dashed] (19.5,7.5) -- (20.75,7.5);

\draw [ fill={rgb,255:red,0; green,0; blue,0} , line width=0.8pt ] (7,7.5) circle (0.5cm);
\draw [ fill={rgb,255:red,0; green,0; blue,0} , line width=0.8pt ] (9,7.5) circle (0.5cm);
\draw [ fill={rgb,255:red,0; green,0; blue,0} , line width=0.8pt ] (19,7.5) circle (0.25cm);
\draw [ fill={rgb,255:red,0; green,0; blue,0} , line width=0.8pt ] (19,8.25) circle (0.25cm);

\draw [line width=1.5pt, color={rgb,255:red,0; green,0; blue,0}, short] (19,7.5) -- (19,8.25);

\node [font=\Large] at (7,6.5) {$\mathcal{C}_1$};
\node [font=\Large] at (9,6.5) {$\mathcal{C}_2$};
\node [font=\Large] at (19,6.5) {$x_M$};

\end{circuitikz}
}%

\caption{Diagram showing the scenario described in extending Remark \ref{Remark: Not possible to identify cluster locations} to FOAF weight dynamics. Under FOAF weight dynamics two clusters ($\mathcal{C}_1$ and $\mathcal{C}_2$) may exist at any distance apart due to an initial network structure in which they are only connected via two distant connected individuals (both at $x_M$) from whom the clusters become disconnected.}
\label{fig:diagram 2}
\end{figure}
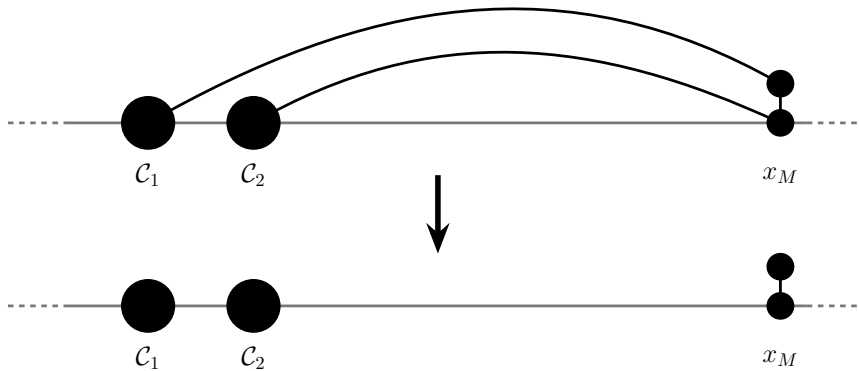

Lastly we may also apply Theorem \ref{Thm: Exp, logistic weight dynamics} to memory weight dynamics. In the proof of Corollary \ref{Cor: Exp, memory weight dynamics} it is shown that eventually $w_{ij}(t)$ is bounded below by some strictly positive constant. This implies that $w_{ij} \nrightarrow 0$ and so from Theorem \ref{Thm: Exp, logistic weight dynamics} we conclude that $x_j - x_i \rightarrow 0$ for all $i,j$. That is, the population reaches consensus. However this approach does not provide the exponential decrease in the diameter that is also shown by Corollary \ref{Cor: Exp, memory weight dynamics}. 

\section{Interaction functions with compact support}
\label{Section: Compact support}

In this section we consider interaction functions with $R<2$. Such interaction functions can be considered as generalisations of the idea of bounded confidence: that individuals are unwilling to interact with those whose opinions are too different from their own, except that this no longer takes the form of a discontinuous cut-off. We specifically consider interaction functions in the following class.

\begin{definition}
    An interaction function $\phi:[-2,2]\rightarrow\mathbb{R}$ is called a smooth bounded confidence interaction function with radius (or confidence bound) $R \in (0,2)$ if it is continuous, symmetric, $\phi(r)>0$ for $|r|<R$ and $\phi(r)=0$ for $|r|\geq R$. 
\end{definition}

Note that this does not require the interaction function to be monotonic, but does exclude the possibility of $\phi$ having additional roots strictly closer than $R$. 

In this case of a fixed network, such interaction functions give rise to the formation of (either singular or multiple) opinion clusters: collections of individuals who share the same opinion but lie a distance apart greater than $R$ \cite{nugent2023evolving}.  We next extend this result to the case of memory weight dynamics. 

\subsection{Memory weight dynamics}

Before providing the main result of this section, we first recall the definition of an $R$-chain from \cite{NugentGomesWolfram2024}.

\begin{definition}
    A group of individuals forms an $R$-chain if, when individuals are ordered by opinion, the distance between each pair of neighbours is strictly smaller than $R$. 
\end{definition}

Recall that, regardless of the network dynamics, when $\phi$ is a smoothed bounded confidence interaction function with radius $R$, if a gap of size at least $R$ is present in the population's opinions it will never close (\cite{nugent2023evolving} Proposition 3.2). Thus if individuals are in different $R$-chains then they will never interact. 

\begin{proposition} \label{Prop: BC, memory weight dynamics}
    Assume that $\phi$ is a smoothed bounded confidence interaction function with radius $R$. Then, for any two individuals $i,j$ either $x_j - x_i \rightarrow 0$ or $\phi(x_j - x_i)\rightarrow 0$ as $t\rightarrow \infty$, with $w_{ij}\rightarrow1$ or $w_{ij}\rightarrow0$ accordingly. 
\end{proposition}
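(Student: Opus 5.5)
The plan is to follow the strategy of Theorem \ref{Thm: Exp, logistic weight dynamics}, peeling off opinion clusters from the bottom. The lower bound $\phi>c_\phi$ used there is replaced by a lower bound on $w_{ij}$ that memory weight dynamics supplies whenever $\phi(x_j-x_i)$ has recently been non-negligible.

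\textbf{Step 1: a memory lower bound.} Every $|dx_i/dt|\le 2$, since it is a weighted average of differences bounded by $2$. Hence $t\mapsto\phi(x_j(t)-x_i(t))$ is Lipschitz in time with constant $4L_\phi$. Suppose $\phi(x_j-x_i)(t)\ge\delta$. Then $\phi\ge\delta/2$ on $[t-\tau,t]$ with $\tau=\delta/(8L_\phi)$, provided $t\ge\tau$. The explicit solution
\begin{align*}
w_{ij}(t)=e^{-t}w_{ij}(0)+\int_0^t e^{s-t}\phi\big(x_j(s)-x_i(s)\big)\,ds
\end{align*}
then gives $w_{ij}(t)\ge\eta(\delta):=\tfrac{\delta}{2}(1-e^{-\tau})>0$. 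Consequently, with $d=x_j-x_i$,
\begin{align*}
w_{ij}\,\phi(d)\,|d|\ \ge\ h(d):=\phi(d)\,\eta\big(\phi(d)\big)\,|d|.
\end{align*}
Here $h$ is continuous and vanishes exactly when $d=0$ or $|d|\ge R$. In particular $h$ is bounded below by a positive constant on $\{\varepsilon\le|d|\le R-\varepsilon\}$.

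\textbf{Step 2: the minimum individual.} As in the proof of Theorem \ref{Thm: Exp, logistic weight dynamics}, the minimum opinion $x_m$ is non-decreasing and bounded, so it converges. The same excursion argument shows that its velocity $g(t)\to0$. Since all terms in the sum are non-negative and $k_m\le N$, we get $w_{mj}\phi(x_j-x_m)(x_j-x_m)\to0$, and hence $h(x_j-x_m)\to0$ for every $j$.

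The gap $d(t)=x_j-x_m$ is continuous. Fix $\varepsilon$. Eventually $d$ cannot lie in the middle band $[\varepsilon,R-\varepsilon]$, so it is eventually either always below $\varepsilon$ or always above $R-\varepsilon$. Letting $\varepsilon\to0$ yields the dichotomy: either $d\to0$, or $\liminf d\ge R$. In the second case $\phi(d)\to0$ by continuity of $\phi$ and since $\phi\equiv0$ beyond $R$.

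This splits the population into $(G1)$, those converging to $x_m^*$, and $(G2)$, those with $\phi(x_j-x_m)\to0$.

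\textbf{Step 3: induction over groups.} For $i\in(G1)$, split $dx_i/dt=g_1+g_2$ exactly as in the theorem. Here $g_1\to0$, $x_i$ converges, and $g_2$ is uniformly continuous (everything is Lipschitz between index changes, and there are no index changes here). Barbalat's lemma \cite{barbalat1959systemes} then gives $g_2\to0$. Step 1 upgrades this to $\phi(x_j-x_i)\to0$ for all $j\in(G2)$. Applying the same argument to the minimum of $(G2)$, and recursing finitely many times, yields the claimed dichotomy for every pair.

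A simplification is available: gaps of size at least $R$ never close (\cite{nugent2023evolving} Proposition 3.2), and every pair with $\phi(d)\to0$ eventually has gap near or above $R$. So one may alternatively run the argument within each eventual $R$-chain.

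\textbf{Step 4: the weights.} The explicit solution above shows $w_{ij}(t)\to\lim_{t\to\infty}\phi(x_j-x_i)$. In the first case of the dichotomy this limit is $\phi(0)$, which equals $1$ under the paper's normalisation $\phi(0)=\max\phi=1$. In the second case it is $0$.

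\textbf{Main obstacle.} I expect the main obstacle to be Step 2/3's use of $h$. We must ensure that the product tending to zero rules out indefinite oscillation of $d$ between $0$ and $R$. Step 1's uniform bound $\eta(\delta)$, combined with continuity of $d$ across the middle band, is exactly what is needed there. The uniform-continuity hypothesis for Barbalat also needs care, but it follows from the Lipschitz bounds.
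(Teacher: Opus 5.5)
Your Steps 1 and 2 are sound, and they are more careful than the paper at that point. The paper only asserts that, since $w_{mj}\phi(x_j-x_m)(x_j-x_m)\to0$ with bounded factors, ``at least one must converge to zero''. That does not follow on its own. Your memory bound $\eta$, combined with continuity of the gap across the band $[\varepsilon,R-\varepsilon]$, is exactly what rules out oscillation.

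The gap is in the recursion, where you apply ``the same argument to the minimum of $(G2)$''. In Step 2 the limit of $x_m$ came from monotonicity. The running minimum $x_{\tilde m}$ of $(G2)$ is not monotone: its velocity is $g_1+g_2$ with $g_1\le0$, because $(G1)$ pulls it down, and $g_1\to0$ gives no control on $\int g_1$. Without a limit for $x_{\tilde m}$, two things fail. First, the excursion argument for $g_2$ breaks down, since each excursion adds a fixed increment that slow downward drift could undo. Second, the next group $(G21)$ has no limit to converge to. Your closing remark about $R$-chains contains the missing ingredient, but you offer it only as an optional simplification and never use it to obtain convergence.

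The paper closes this with a squeeze. Eventually all of $(G1)$ lies below $x_m^*+\varepsilon$ and all of $(G2)$ lies above $x_m^*+R-\varepsilon$. So the largest opinion in $(G1)$ and $x_{\tilde m}$ become neighbours in the ordering. Since a gap of size at least $R$ never closes, there are two cases:
\begin{itemize}
\item The gap between them reaches $R$ at some time. After that, $(G2)$ is an isolated subsystem, its minimum is non-decreasing, and your Step 2 applies verbatim.
\item The gap stays below $R$ forever. Then for any $\varepsilon>0$ and all large $t$
\[
x_m^*+R-\varepsilon < x_{\tilde m}(t) < \max_{j\in(G1)}x_j(t)+R < x_m^*+R+\varepsilon ,
\]
so $x_{\tilde m}\to x_m^*+R$.
\end{itemize}
With this in place the rest of your argument goes through, and the same squeeze is needed at every later stage:
\begin{itemize}
\item The derivative of the running minimum only jumps downwards, and $g_1\to0$ uniformly over $(G2)$.
\item Hence the excursion argument gives $g_2\to0$.
\item Your function $h$ then splits $(G2)$ as before.
\end{itemize}

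Some smaller points:
\begin{itemize}
\item Your Step 3 for $i\in(G1)$ is unnecessary. $\phi(x_j-x_i)\to0$ for $j\in(G2)$ already follows from $x_i-x_m\to0$ and uniform continuity of $\phi$, as the paper notes.
\item In that step, Barbalat's lemma should be applied to $x_i$ itself, whose derivative is Lipschitz for fixed $i$, and $g_1$ then subtracted. As written, you need $\int_0^\infty g_2<\infty$, which does not follow from $g_1\to0$.
\item Your Step 4 caveat is correct and left implicit in the paper: $w_{ij}\to\phi(0)$, which equals $1$ only under the normalisation $\phi(0)=1$.
\end{itemize}
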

\begin{proof}
    As distinct $R$-chains never interact we can consider them separately. In addition for $i,j$ in distinct $R$-chain $\phi(x_j - x_i)=0$ at all times, so the statement is immediately satisfied. Hence for simplicity we may assume that the entire population begins and remains an $R$-chain, as if it were to begin as, or break up into, multiple $R$-chains we could consider these independently from that point onwards. 
   
    Consider again the individual with the minimum opinion $x_m$. As in the proof of Theorem \ref{Thm: Exp, logistic weight dynamics} we must have that for all $j$, 
    \begin{align*}
        w_{mj}\,\phi(x_j - x_m)\,(x_j - x_m) \rightarrow 0 \,.
    \end{align*}
    All terms in the product are bounded and at least one must converge to zero. As we are considering memory weight dynamics, $w_{mj}\rightarrow0$ if and only if $\phi(x_j - x_m) \rightarrow 0$. We can therefore split the population into two groups: those for whom $x_j - x_m \rightarrow 0$ (meaning $x_j \rightarrow x_m^*$, the limiting minimum opinion) referred to as $(G1)$; and those for whom $\phi(x_j - x_m)\rightarrow 0$, referred to as $(G2)$. 

    The proposition statement is thus satisfied for individuals $i \in (G1)$. If $j\in(G1)$ also then both $x_i$ and $x_j$ converge to $x_m^*$ and so $x_j-x_i\rightarrow0$. If $j\in(G2)$ then $\lim_{t\rightarrow\infty}\phi(x_j - x_i) = \lim_{t\rightarrow\infty}\phi(x_j - x_m) = 0$ due to the continuity of $\phi$. 

    As in the proof of Theorem \ref{Thm: Exp, logistic weight dynamics} we now consider the minimum element of $(G2)$, with the goal of splitting this group and repeating a similar argument. 

    Note firstly that as $x_j \rightarrow x_m^*$ for all $j\in(G1)$, for any $\varepsilon>0$ there exists a time $t$ after which $|x_j - x_m^*|<\varepsilon$ for all $j\in(G1)$. Similarly, as $\phi(x_j-x_m)\rightarrow0$ for all $j \in (G2)$, for any $\varepsilon>0$ there exists a time $t$ after which $x_j > x_m^* + R - \varepsilon$ for all $j\in(G2)$. This is true since $R\in(0,2)$ is the smallest root of $\phi$. For a given $\varepsilon>0$ let $t^*$ be the maximum of these two times. 
    
    Let $\Tilde{m} \in (G2)$ be the individual in $(G2)$ with the minimum opinion. Using the above, $x_{\Tilde{m}} > x_m^* + R - \varepsilon$ after $t^*$. As the population must remain an $R$-chain, and after $t^*$ all individuals in $(G1)$ are within $\varepsilon$ of $x_m^*$, we conclude that $x_{\Tilde{m}} \in (x_m^* + R - \varepsilon, x_m^* + R + \varepsilon)$. Hence $x_{\Tilde{m}} \rightarrow x_m^* + R$. 
        
    We can split the derivative of $x_{\Tilde{m}}$ into two parts as follows
    \begin{align*}
        \frac{dx_{\Tilde{m}}}{dt} = \frac{1}{k_{\Tilde{m}}} \bigg( \sum_{j\in(G1)} w_{\Tilde{m},j}\,\phi(x_j - x_{\Tilde{m}})\,(x_j - x_{\Tilde{m}}) + \sum_{j\in(G2)} w_{\Tilde{m},j}\,\phi(x_j - x_{\Tilde{m}})\,(x_j - x_{\Tilde{m}}) \bigg) =: g_1(t) + g_2(t)\,.
    \end{align*}
    As $\Tilde{m}\in(G2)$ we have that $\phi(x_j - x_{\Tilde{m}})\rightarrow 0$ for all $j \in(G1)$ and so $g_1(t)\rightarrow 0$ as $t\rightarrow\infty$. In addition $g_2(t)$ is positive, as it contains only interactions within $(G2)$ and $x_{\Tilde{m}}$ is the minimum element of $(G2)$. As previously with $g$ in the proof of Theorem \ref{Thm: Exp, logistic weight dynamics}, $g_2$ may be discontinuous when decreasing but is uniformly Lipschitz continuous when increasing. Hence by the same argument, $g_2(t)\rightarrow 0$ as $t \rightarrow \infty$. 

    This brings us to the same position as before. Now for each $j\in (G2)$ either $x_j - x_{\Tilde{m}} \rightarrow 0$ (meaning $x_j \rightarrow x_m^* + R$) or $\phi(x_j - x_{\Tilde{m}})\rightarrow 0$. The argument above can then be repeated as many times as necessary until all individuals have been shown to converge to $x_m^* + nR$ for some $n\in\mathbb{N}$, which implies the result. 
\end{proof}

\begin{remark}
    In practice, for a bounded confidence interaction function, we do not expect the population to remain in a single $R$-chain. In fact the typical behaviour is that the population breaks up into clusters a distance apart further than $R$. However, as the result above concerns the long-time behaviour, it can simply be applied to each $R$-chain independently after this break-up has occurred. 
\end{remark}

\subsection{Logistic weight dynamics} \label{Section: BC, Logistic}

In this case it is not possible to describe the behaviour of the system without additional information about the initial network structure. To highlight this we again separate the population into groups as in the previous proofs. Some details are omitted as this construction is essentially the same as those above. 

We again choose an individual $i$ with $x_i \rightarrow x_m^*$. For all $j$, 
\begin{align*}
    w_{ij}\,\phi(x_j - x_i)\,(x_j - x_i) \rightarrow 0 \,.
\end{align*}
This now gives three options: $x_j \rightarrow x_m^*$ $(G1)$, $\phi(x_j-x_m^*)\rightarrow 0$ $(G2)$ or $w_{ij}\rightarrow 0$ for all such $i$ $(G3)$. As before later groups exclude earlier ones. 

Note that when considering $\phi$ with full support in Section \ref{Section: Full support} we could exclude $\phi(x_j-x_m^*)\rightarrow 0$, leaving only two options. In Proposition \ref{Prop: BC, memory weight dynamics} we could not exclude $\phi(x_j-x_m^*)\rightarrow 0$, but memory weight dynamics mean that $\phi(x_j-x_m^*)\rightarrow 0$ if any only if $w_{ij}\rightarrow 0$, again reducing to only two options. When considering a bounded confidence interaction function and logistic weight dynamics we cannot reduce the number of groups. 

The central problem is that nothing is known about the opinions of individuals in group $(G3)$. Recalling the construction in Remark \ref{Remark: Not possible to identify cluster locations}, in which an initially connected network becomes disconnected under logistic weight dynamics, it is also possible in this scenario for individuals in $(G3)$ to have essentially arbitrary opinions. As a result it is not possible to characterise in general the interactions between individuals in $(G2)$ and those in $(G3)$, meaning the argument cannot be applied repeatedly to gradually separate the population into opinion clusters.

Instead, we study this combination of bounded confidence interaction functions and logistic weight dynamics through simulations, extending the analysis performed in \cite{nugent2023evolving} for Erdos-Renyi (ER) random networks to different initial network structures, including Watts-Strogatz (WS) random networks, Barabasi-Albert (BA) random networks and a Stochastic Block Model (SBM). We first introduce these random networks and provide some intuition on their initial structure and how it develops, then describe the setup for more extensive simulations.

\begin{figure}[ht!]
    \centering
    \includegraphics[width=\linewidth]{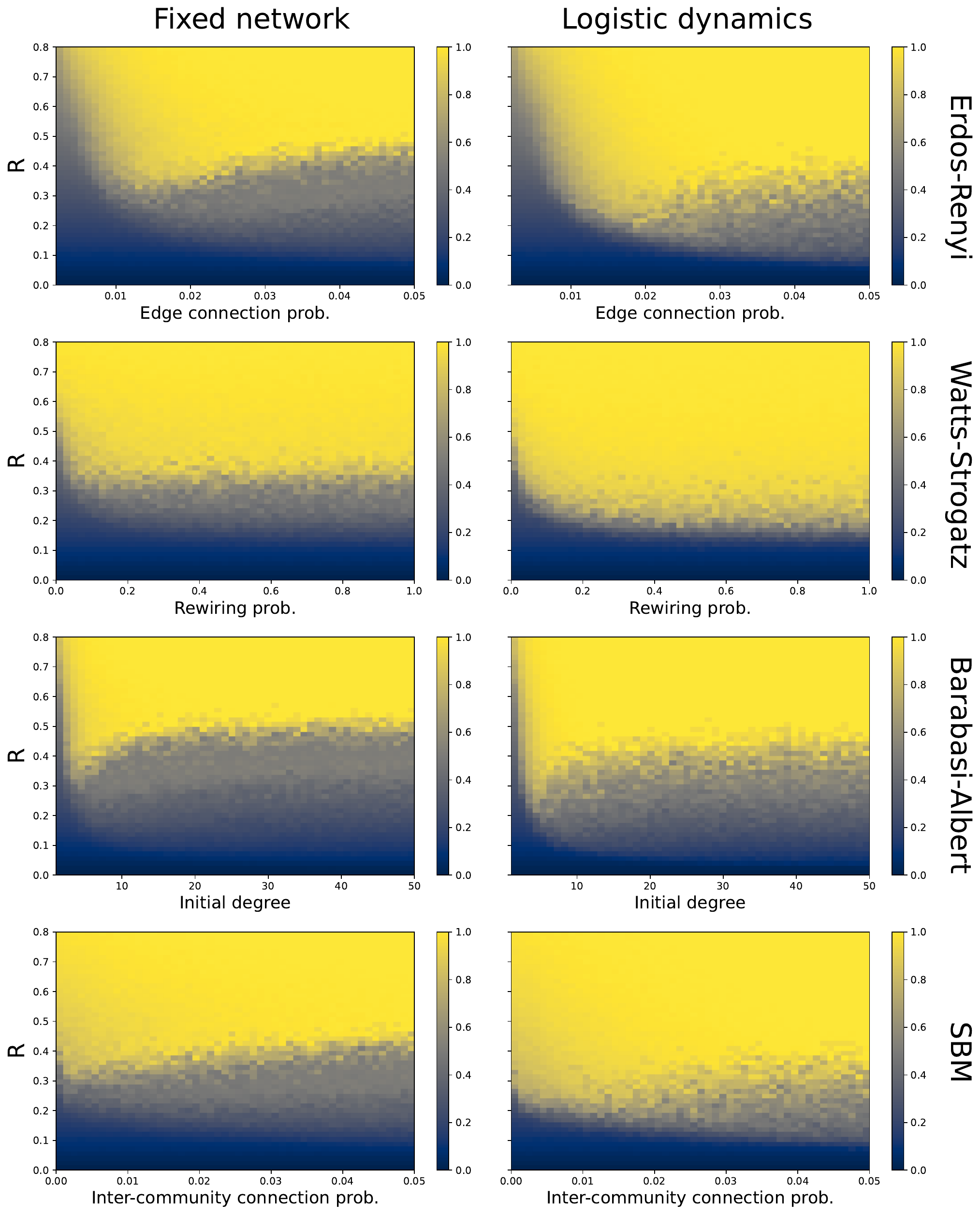}
    \caption{Heatmaps showing the order parameter \eqref{eqn: order parameter} at steady state for opinion formation with a fixed network (left) and logistic weight dynamics (right) using different random network models to generate $w(0)$. Blue areas show opinion clustering, grey areas indicate polarisation while yellow means the population has reached consensus. In each heatmap the cutoff value $R$ for a bounded confidence interaction function is varied on the vertical axis, while a model parameter specific to each random network is varied along the horizontal axis. In general, logistic weight dynamics makes consensus more common but cannot prevent clustering when $R$ is very low.}
    \label{fig:heatmaps with different initial conditions}
\end{figure}

\subsubsection{Random network models} \label{Section: random network models}

Using ER random networks as a baseline, each other model introduces some additional structure. For each type of random network, a version of Figure \ref{fig:exp_logistic_clusters} is given in Appendix \ref{Appendix: Example trajectories} to show example opinion trajectories, network statistics and snapshots of network evolution. Note that, as in Figure \ref{fig:exp_logistic_clusters}, an exponential interaction function is used, and therefore the long-term behaviour follows according to Theorem \ref{Thm: Exp, logistic weight dynamics}. For further details on random network models see, for example, \cite{newman2003structure,goldenberg2010survey}. 

The WS random network creates a `small-world' network by linking individuals mostly to those nearby them (in the sense that the indices $i$ and $j$ are close) with some random rewiring to create connections across the network. \cite{watts1998collective} In the simulations shown in Figure \ref{fig:heatmaps with different initial conditions}, individuals are connected to their $10$ nearest neighbours (in a ring) and the rewiring probability is varied. For Figure \ref{fig:example WS network} in Appendix \ref{Appendix: Example trajectories}, the number of initial connections is increased to $20$, and the rewiring probability is set at $0.05$ to make the network structure more obvious. There is an initial ring structure that rapidly loses many edges, as individuals are unlikely to share similar opinions with their limited number of neighbours. Rather than forming highly connected communities as in Figure \ref{fig:exp_logistic_clusters} (or the later Figure \ref{fig:example SBM network}), the network instead breaks up into a collection of chains in which individuals share similar opinions. 

The BA random network \cite{barabasi1999emergence} is constructed sequentially using preferential attachment: as new nodes are added with a fixed degree, these edges are attached to existing nodes according to the degree of those nodes. This creates central hubs and a power-law degree distribution. In the simulations shown in Figure \ref{fig:heatmaps with different initial conditions}, this initial number of edges is varied. Figure \ref{fig:example BA network} in Appendix \ref{Appendix: Example trajectories} shows an example with $5$ initial edges. There is a clear initial structure of a highly connected centre with fewer connections as one moves away from this. As the network evolves, many individuals on the periphery, who have only a small number of initial connections, become disconnected from the network. Meanwhile, the central group splits into three communities. This leads to a similar opinion clustering, but with many isolated individuals. 

Finally, the SBM adapts the ER network to create networks with some initial community structure \cite{holland1983stochastic}. The population is split into two equally sized communities. Edges within communities exist with probability $p_{\text{within}} = 0.05$, and the probability of edges existing between communities is varied. Figure \ref{fig:example SBM network} in Appendix \ref{Appendix: Example trajectories} shows an example in which $p_{\text{within}}$ is increased to $0.2$ and $p_\text{between} = 0.02$, again to make the network structure more obvious. The motivation for choosing a smaller value of $p_\text{within}$ for simulations in Figure \ref{fig:heatmaps with different initial conditions} is that, similarly to the ER random network, behaviour is less varied when the network is highly connected and interesting transitions occur when these edge probabilities are lower. In Figure \ref{fig:example SBM network}, there are two strong initial communities. By $t=5$ each community has organised into three sub-communities based on opinion. Over time there is a rearrangement from the initial imposed communities to the opinion-driven communities.  

\subsubsection{Opinion clustering under bounded confidence} \label{Section: Simulations BC}

Having introduced the random network models that will be considered, we return to the setting of logistic weight dynamics and a bounded confidence interaction function. To reduce the number of overall parameters, we use here the discontinuous bounded confidence function, which has a single parameter: the confidence radius $R$. This also allows a more direct comparison with the results in \cite{nugent2023evolving} for ER random networks. For each random network structure, we vary both network parameters and the confidence bound $R$, and report the order parameter $Q$ \eqref{eqn: order parameter} once the simulations have reached a steady state (or a maximum time of $t=2000$). Figure \ref{fig:heatmaps with different initial conditions} shows the simulation results. The top row is a reproduction of the results in \cite{nugent2023evolving} for logistic weight dynamics on an ER random network. The remaining rows show new results for the random network models described above. In all cases, $N=500$ and each cell of the heatmap is averaged over $10$ simulations with different initial conditions. 

All four network structures show several common features for both a fixed network (left column) and under logistic weight dynamics (right column). There is always a transition from many clusters (appearing as dark blue) to polarisation (appearing as grey) to consensus (appearing as yellow) as the interaction radius $R$ is increased. As network parameters are varied, the position of these transitions changes. 

For ER, BA and SBM networks, as the network parameter (edge probability, initial degree and between-community ege probability respectively) is increased, the mean degree of the network increases. In these cases, there is a non-monotonic relationship between connectivity and the order parameter for $R$ values around $0.4$. This is seen most strongly in the BA networks. When connectivity is extremely low, consensus does not emerge. This is followed by an optimal region in which consensus occurs, then as connectivity increases further, the population polarises. This effect is seen for both a fixed network and under logistic weight dynamics, which is natural as both will have essentially the same level of network connectivity. This suggests that networks with lower connectivity, in which opinion spread is slower, avoid the rapid formation of clusters and may be more likely to reach consensus. 

For small values of $R$, it appears that the transition from many clusters (blue) to polarisation (grey) is largely unchanged by logistic weight dynamics. This suggests that the high number of clusters is caused by a lack of connectivity in the initial network, which cannot be changed by this type of weight dynamics. 

However, for all four network structures, logistic weight dynamics reduced the parameter space in which polarisation is observed and instead led to consensus. In particular, for WS graph,s this grey region is almost entirely eliminated under logistic weight dynamics. As Figure \ref{fig:example WS network} shows, the network breaks up from an initial ring into a collection of chains. This sharp transition indicates that under logistic weight dynamics either the ring persists (leading to consensus) or breaks into many pieces (leading to a large number of clusters) and that the formation of two large chains/communities is unlikely. This demonstrates that at more moderate $R$ values the change in the strength of weights, even without the addition of new edges, can be effective in creating consensus. 

\section{Early weight dynamics} \label{Section: Early dynamics}

While the study of opinion dynamics, and indeed this paper, typically focuses on the long-term convergence of the system, we may also investigate the early dynamics. In particular, we aim to understand the relationship between the initial distance between individuals' opinions and their corresponding edge weight, with the goal of estimating `typical' edge weight dynamics. As a starting point we consider the simplest possible approximation for memory weight dynamics. 

Denote by $d_{ij} = x_j - x_i$ the distance between two individuals' opinions. As initial opinions are assumed to be ordered we have that $d_{ij} \geq 0$ for $i\leq j$ and focus on these positive values. To reduce notational complexity we denote $\phi_{ij}(t) = \phi(d_{ij}(t))$. 

Assume that for small $t\geq0$, $d_{ij}(t)\approx d_{ij}(0)$. Therefore 
\begin{align*}
    w_{ij}(t) 
    &= w_{ij}(0) \, e^{-t} + \int_0^t e^{s-t} \phi_{ij}(s)\,ds \\
    &\approx w_{ij}(0) \, e^{-t} + \int_0^t e^{s-t} \phi_{ij}(0)\,ds \\
    &= w_{ij}(0) \, e^{-t} + \phi_{ij}(0) (1 - e^{-t}) \,.
\end{align*}
Therefore for short times we have the initial bounds 
\begin{align*}
    \phi_{ij}(0) (1 - e^{-t}) \lesssim w_{ij} \lesssim e^{-t} + \phi_{ij}(s) (1 - e^{-t}) \,.
\end{align*}
However, this completely disregards the opinion formation process. To improve these bounds, we can define 
\begin{align*}
    \quad v_{ij} := \frac{d(d_{ij})}{dt} \bigg|_{t=0} \,,
\end{align*}
and use the approximations
\begin{subequations} \label{eqn: d and phi approximations}
    \begin{align}
        d_{ij}(t) &\approx d_{ij}(0) + t v_{ij} \,,\\
        \phi_{ij}(t) &\approx \phi_{ij}(0) + tv_{ij}\,\phi_{ij}'(0) + \frac{1}{2}t^2v_{ij}^2 \,\phi_{ij}''(0) \,. \label{eqn: phi approximation}
    \end{align}
\end{subequations}
A linear approximation is chosen for $d_{ij}$ since, as will be seen below, it is challenging to approximate the first derivative of $d_{ij}$ and would be impractical to approximate its second derivative due to the high level of coupling. However, using this we may still take a quadratic approximation of $\phi$, provided it is sufficiently smooth. 

For memory weight dynamics this gives the estimate 
\begin{align} \label{eqn: memory estimate}
    w_{ij}(t) 
    &= w_{ij}(0)e^{-t} + \phi_{ij}(0)\,(1-e^{-t}) + v_{ij}\,\phi_{ij}'(0)\,(t - 1 + e^{-t}) + \frac{1}{2}v_{ij}^2\,\phi_{ij}''(0) ((t-2)t + 2 - 2e^{-t}) \,,
\end{align}
while for logistic weight dynamics, we have 
\begin{align}
    w_{ij}(t) &\approx \frac{w_{ij}(0)}{w_{ij}(0) + (1-w_{ij}(0))e^{-\lambda(t)}} \,, \label{eqn:logistic estimate}\\[0.5em]
    \lambda(t) &= 2\Big(\phi_{ij}(0) t + \frac{1}{2}t^2 v_{ij}\,\phi_{ij}'(0) + \frac{1}{6}t^3 v_{ij}^2 \phi_{ij}''(0) \Big) - t \,. \nonumber
\end{align}
Both these estimates require an approximation of $v_{ij}$, the derivative of $d_{ij}$ at time $t=0$. That is, if two individuals begin at a distance $d_{ij}$, how quickly may we expect them to move together/apart. By definition
\begin{align*}
    v_{ij} 
    &= \sum_{n=1}^N \frac{w_{jn}(0)}{k_j(0)} \phi_{jn}(0) \, d_{jn}(0) - \frac{w_{in}(0)}{k_i(0)} \phi_{in}(0) \, d_{in}(0) \,.
\end{align*}

Clearly, this depends on many variables other than the distance $d_{ij}(0)$ - in fact it depends on the initial opinions of the entire population. We therefore aim to estimate the moments of $v_{ij}$. Using the setup considered in \cite{nugent2023evolving}: initial opinions are distributed uniformly at random, the initial network is an ER random network with probability $p\in[0,1]$, with the weight of existing edges chosen uniformly in $[0,1]$. From this we can calculate 
\begin{align*}
    \EE[w_{ij}] = \frac{p}{2} \,, \quad \var[w_{ij}] = p \bigg( \frac{1}{3} - \frac{p}{4} \bigg) \,.
\end{align*}
While division by node degrees introduces some correlation, we use the above to give the following approximations
\begin{align} \label{eqn: weight moment approximations}
    \EE \bigg[ \frac{w_{ij}}{k_i}\bigg] \approx \frac{1}{N} \,, \quad \var \bigg[ \Big(\frac{w_{ij}}{k_i}\Big)^2\bigg] \approx \frac{1}{N^2}\bigg(\frac{4}{3p} - 1 \bigg)\,.
\end{align}
From this, we may estimate the first and second moments of $v_{ij}$, conditional on $d_{ij}$. 

\begin{figure}[ht!]
    \centering
    \includegraphics[width=0.8\linewidth]{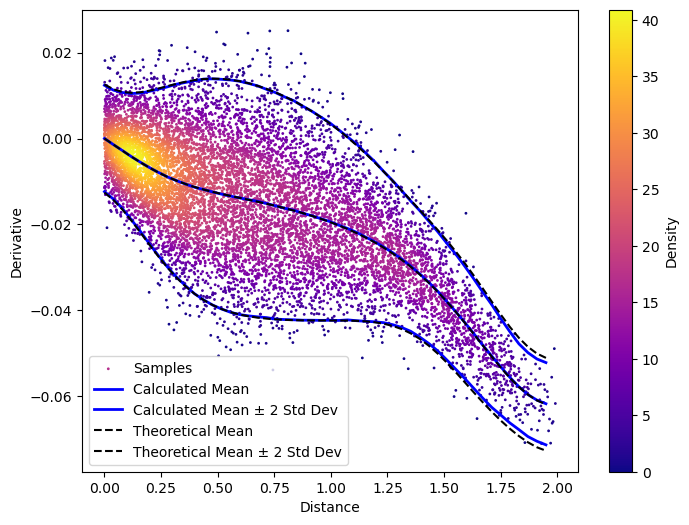}
    \caption{Demonstration of the estimates \eqref{eqn: v_ij estimates} using an exponential interaction function. A population size of $N=100$ was used, with $100$ different, random initial conditions generated to allow accurate calculation of the mean and variance. A representative $10,000$ data points are plotted, coloured using a kernel density estimate. The estimates capture the distribution well.}
    \label{fig:distance vs derivative}
\end{figure}

\begin{proposition} \label{Prop: moments of v_ij}
    Using approximations \eqref{eqn: d and phi approximations} and \eqref{eqn: weight moment approximations}, $v_{ij}$ has the following conditional moments
    \begin{subequations} \label{eqn: v_ij estimates}
    \begin{align}
        \EE[v_{ij}\,|\,d_{ij}] &= \frac{1}{2-d_{ij}}\int_{-1}^{1-d_{ij}} I_1(d_{ij},x) -\frac{2}{N} \Big( I_1(d_{ij},x) + \phi_{ij} \, d_{ij} \Big)\, dx \,,\\
        \EE[v_{ij}^2\,|\,d_{ij}] 
        &= \frac{1}{2-d_{ij}}\int_{-1}^{1-d_{ij}} \frac{1}{N^2}f(x) + \frac{1}{N^2}\bigg(\frac{4}{3p} - 1 \bigg) g(x) \, dx \,, \label{eqn: v_ij variance estimate}
    \end{align} 
    \end{subequations}
    where
    \begin{align*}
        I_m(d_{ij},x) 
        &= \frac{1}{2}\int_{-1}^1 \Big( \phi(y-x-d_{ij}) \, (y-x-d_{ij}) - \phi(y-x) \, (y-x) \Big)^m\,dy \,, \\[0.5em]
        f(d_{ij},x) &= (N-2)(N-3) I_1(d_{ij},x)^2 + (N-2) I_2(d_{ij},x) + 4(\phi_{ij}d_{ij})^2 - 4(\phi_{ij}\,d_{ij})(N-2)I_1(d_{ij},x) \,,\\[0.5em]
        g(d_{ij},x) &= 2(\phi_{ij} \, d_{ij} )^2 \, + (N-2) \big(J(x + d_{ij}) + J(x) \big) \,,\\[0.5em]
        J(x) &= \frac{1}{2} \int_{-1}^1 \Big(\phi(y-x)\,(y-x) \Big)^2 \, dy \,.
    \end{align*}
\end{proposition}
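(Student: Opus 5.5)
The plan is a direct moment computation. We condition on $d_{ij}$ and on the position $x := x_i(0)$, and treat the remaining $N-2$ opinions $x_n$, $n\neq i,j$, as i.i.d. uniform on $[-1,1]$. Given $d_{ij}$, the pair must satisfy $x\in[-1,1-d_{ij}]$ so that $x_j = x+d_{ij}\in[-1,1]$. Under uniform initial opinions, conditioning on the gap makes $x$ uniform on this interval. This produces the outer average $\frac{1}{2-d_{ij}}\int_{-1}^{1-d_{ij}}\cdot\,dx$. It then remains to compute $\EE[v_{ij}\mid d_{ij},x]$ and $\EE[v_{ij}^2\mid d_{ij},x]$.

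\textbf{Step 1: decomposition.} Write $v_{ij}=\sum_n (a_n - b_n)$, where
\[
a_n = \tfrac{w_{jn}}{k_j}\phi_{jn}d_{jn}, \qquad b_n = \tfrac{w_{in}}{k_i}\phi_{in}d_{in}.
\]
Separate the terms $n\in\{i,j\}$ from the rest. The $n=i$ and $n=j$ terms contribute
\[
-\tfrac{w_{ji}}{k_j}\phi_{ij}d_{ij}-\tfrac{w_{ij}}{k_i}\phi_{ij}d_{ij},
\]
using symmetry of $\phi$ and $d_{ji}=-d_{ij}$. The self-terms vanish.

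\textbf{Step 2: the mean.} By approximation \eqref{eqn: weight moment approximations}, each weight ratio is treated as independent of the opinions with mean $1/N$. For each of the $N-2$ generic $n$ we then get
\[
\frac1N\cdot\frac12\int_{-1}^1\big(\phi(y-x-d_{ij})(y-x-d_{ij})-\phi(y-x)(y-x)\big)\,dy=\frac{I_1}{N}.
\]
The two special terms give $-\frac{2}{N}\phi_{ij}d_{ij}$. Summing, $(N-2)I_1/N - 2\phi_{ij}d_{ij}/N = I_1 - \frac2N(I_1+\phi_{ij}d_{ij})$, which is the integrand claimed.

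\textbf{Step 3: the second moment.} Expand $v_{ij}^2$ as a double sum. There are three kinds of contribution.
\begin{itemize}
\item Cross terms between distinct generic $n\neq n'$: by independence these give $(N-2)(N-3)I_1^2/N^2$.
\item Diagonal generic terms: $\EE[(a_n-b_n)^2]$. Here one must distinguish mean products $\EE[w/k]^2 = 1/N^2$ from second moments $\EE[(w/k)^2] = \frac1{N^2}\big(1+(\frac4{3p}-1)\big)$. The mean-squared part yields $(N-2)I_2/N^2$. The excess variance part, multiplied by the squared individual terms, yields $(N-2)(J(x+d_{ij})+J(x))\,(\frac{4}{3p}-1)/N^2$. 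This uses that $w_{jn}$ and $w_{in}$ are independent, so their cross term carries only the mean product and is already absorbed into $I_2$.
\item Terms involving the special pair: its square gives $4(\phi_{ij}d_{ij})^2/N^2$ plus the variance excess $2(\phi_{ij}d_{ij})^2(\frac4{3p}-1)/N^2$. Its cross terms with the generic sum give $-4\phi_{ij}d_{ij}(N-2)I_1/N^2$.
\end{itemize}
Collecting these reproduces exactly $f$ and $g$.

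\textbf{Main obstacle.} The main difficulty is bookkeeping in Step 3. One must track which weight pairs are the same random variable and which are independent. In particular, $w_{ij}=w_{ji}$ is shared between the two special terms, so their squared sum must be handled carefully to land on the stated coefficients $4$ and $2$. The correlations induced by normalising by $k_i$ and $k_j$ are simply dropped, as the approximations permit.
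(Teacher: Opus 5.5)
Your proposal is correct and follows the paper's own route. Like the paper, you condition on $x_i=x$ with $x$ uniform on $[-1,1-d_{ij}]$, split off the $n\in\{i,j\}$ terms, and use the moment approximations together with independence of the normalised weights; your three groups are a coarser bookkeeping of the paper's six cases. Your special-pair computation gives $2(\phi_{ij}d_{ij})^2$ in $g$, exactly as stated. The paper's appendix instead ends with $4(\phi_{ij}d_{ij})^2$, but only because its Cases 3 and 4 carry a spurious $(\phi_{ij}d_{ij})^2\var[\omega]$ term.

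One correction to your closing remark. The coefficient $2$ comes precisely from treating $w_{ij}/k_i$ and $w_{ji}/k_j$ as independent, so that their cross term is $1/N^2$, as the paper does in its Case 1. The shared weight $w_{ij}=w_{ji}$ is therefore being ignored, not handled carefully. If you actually exploited it, the cross term would be $\approx\EE[\omega^2]$ and the coefficient in $g$ would become $4$.
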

\begin{proof}
    See Appendix \ref{Appendix: Proofs}.
\end{proof}

An example of these approximations for an exponential interaction function and $N=100$ are shown in Figure \ref{fig:distance vs derivative}. The mean and variance estimates fit well and the mean $\pm$ two standard deviations provides reasonable upper and lower bounds on $v_{ij}$. There is a slight disagreement in variance for higher values of $d_{ij}$, although the data in this area is naturally more sparse. 

The precise shape of these curves will be dependent on the population size, the interaction function $\phi$ and the edge probability $p$, which determines the variance of edges and therefore the balance of the functions $f$ and $g$ in \eqref{eqn: v_ij variance estimate}. Note that $g$ contains no terms of order $N^2$. Therefore the variance arising from uncertainty in the network, given by the term
\begin{align*}
    \frac{1}{N^2} \bigg( \frac{4}{3p} - 1 \bigg) g(x) \,,
\end{align*}
in $\EE[v_{ij}^2 \, | \, d_{ij}]$ will vanish as $N\rightarrow\infty$. However uncertainty due to the unknown position of $x_i$ remains as $f$ contains terms of order $N^2$. This seems natural as the ER random network is known to converge to a uniform graphon as $N\rightarrow\infty$ \cite{lovasz2012large}, meaning the behaviour of the network becomes deterministic and therefore would contribute no uncertainty to this approximation. 

\begin{figure}[ht!]
    \centering
    \includegraphics[width=.9\linewidth]{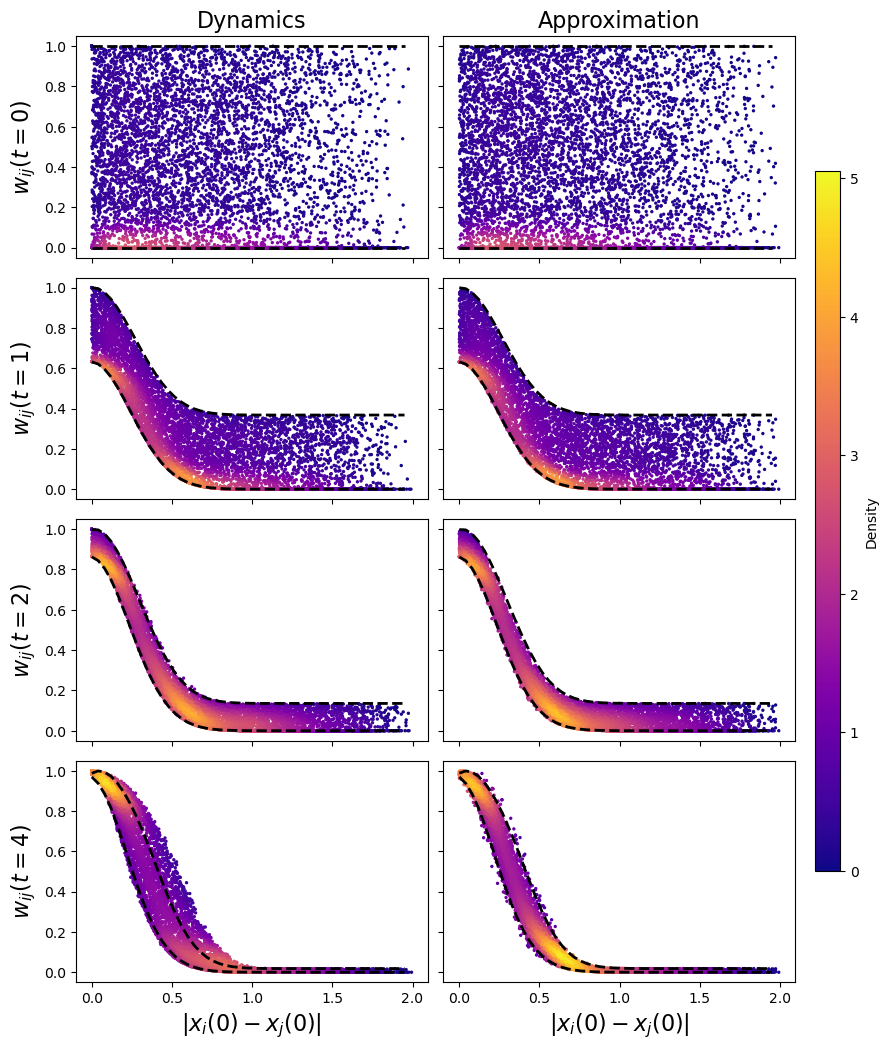}
    \caption{Weights under memory weight dynamics (left) and approximation  for small $t$ obtained by combining \eqref{eqn: d and phi approximations} and \eqref{eqn: memory estimate} (right). The scatter plot shows actual simulation data on the left and an equal number of randomly sampled points on the right. Both are coloured using a kernel density estimate. Black dashed lines show upper and lower estimates using $w_{ij}=0,1$ and the mean of $v_{ij}$ plus or minus two standard deviations (using the estimates from Proposition \ref{Prop: moments of v_ij}). There is good agreement between simulations and approximations until $t=2$, after which the dynamics deviate from each other.}
    \label{fig:weight dynamics short time estimates}
\end{figure}

\begin{figure}[ht!]
    \centering
    \includegraphics[width=.9\linewidth]{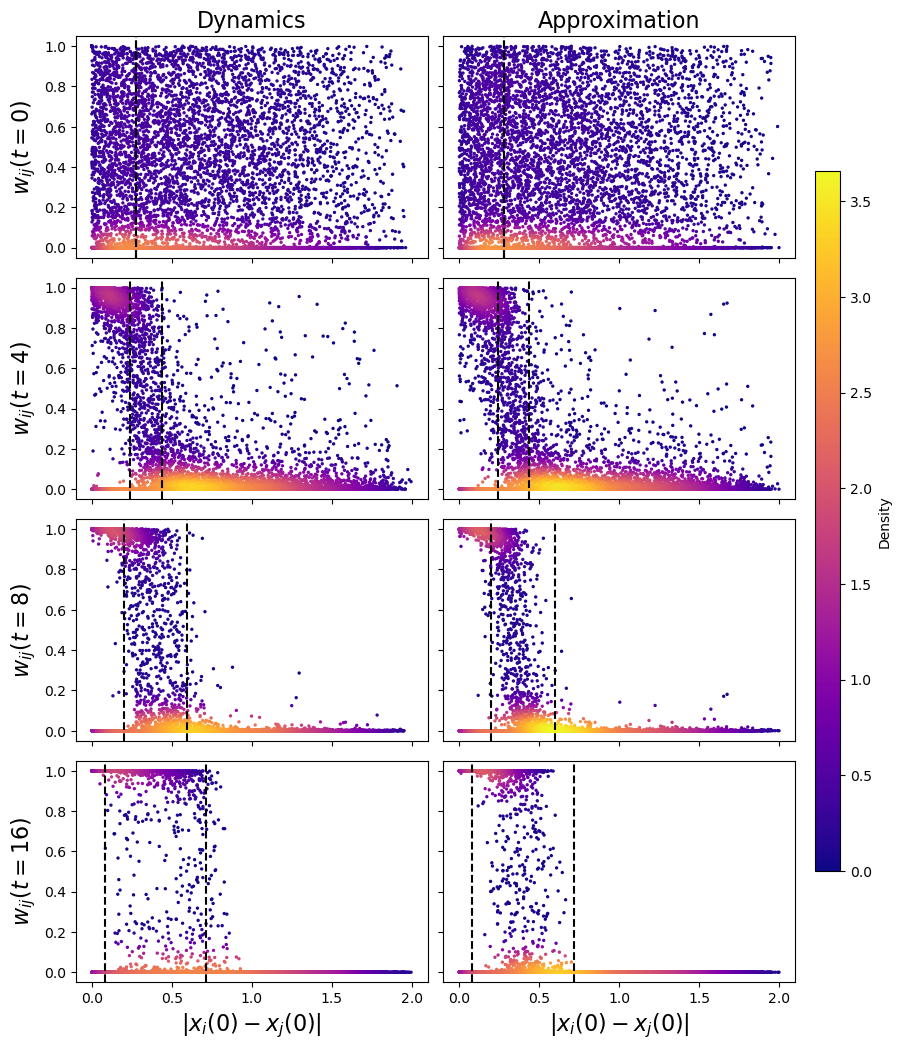}
    \caption{Weights under logistic weight dynamics (left) and approximation  for small $t$ obtained by combining \eqref{eqn: d and phi approximations} and \eqref{eqn:logistic estimate} (right). The scatter plot shows actual simulation data on the left and an equal number of randomly sampled points on the right. Both are coloured using a kernel density estimate. Black dashed lines show the boundaries $d^\pm$, described at the end of Section \ref{Section: Early dynamics}.}
    \label{fig:early weight dynamics logistic}
\end{figure}

These estimates can then be combined with \eqref{eqn: memory estimate} or \eqref{eqn:logistic estimate} to approximate $w_{ij}$. It should be noted that the approximation of $d_{ij}$ is linear, therefore we cannot account for possible non-linear effects in the opinion dynamics that are likely to occur at later times. An example of this, using the same setup as for Figure \ref{fig:distance vs derivative} is shown in Figure \ref{fig:weight dynamics short time estimates}. Specifically, a random Erdos-Renyi network with $N=100$, $p=0.5$ is created, existing edges are given a random weight uniformly in $[0,1]$ and initial opinions are chosen uniformly in $[-1,1]$. The left column then shows the weight $w_{ij}(t)$ at four timepoints for 10,000 randomly selected pairs. Their positions on the $x$-axis are given by their initial distance $d_{ij}(0)$ to allow for a comparison against the approximations above. Points are coloured according to a Gaussian kernel density estimate (KDE) to show their distribution. 

For the right hand column 10,000 random samples are generated for each timepoint. For each sample an initial edge weight is chosen as $w_{ij}(0)=0$ with probability $1-p$ or $w_{ij}\sim \text{Uniform}[0,1]$ with probability $p$. Opinions $x_i$ and $x_j$ are sampled uniformly in $[-1,1]$ and swapped if necessary so that $d_{ij} = x_j - x_i \geq 0$. Given this $d_{ij}$, $v_{ij}$ is sampled from a normal distribution with mean $\mathds{E}[v_{ij}\,|\,d_{ij}]$ and variance $\var[v_{ij}\,|\,d_{ij}]$ given by \eqref{eqn: v_ij estimates}. The estimate \eqref{eqn: memory estimate} is then used to approximate $w_{ij}(t)$. Again the points are coloured using a KDE to show their distribution.

Finally a reasonable upper and lower bound are shown in black dashed lines on both columns. To calculate these we require upper and lower bounds for $w_{ij}$ and $v_{ij}$, so take $1$ and $0$ respectively for $w_{ij}$ and the mean $\pm$ two standard deviations for $v_{ij}$ (as shown by the black dashed lines in Figure \ref{fig:distance vs derivative}).

Due to the independence between individuals both the dynamics and the approximation have, by construction, the same initial distribution over $(d_{ij},w_{ij})$. As $p=0.5$, half the weights in the network will begin at $0$ with the other half being distributed across $[0,1]$. For distances $d_{ij}\geq0$, their distribution is given by
\begin{align*}
    \mathbb{P}[d_{ij} =d] 
    &= \frac{1}{2} \int_{-1}^1 \mathbb{P}[d_{ij} =d \, | \, x_i = x] \, dx 
    = \frac{1}{2} \int_{-1}^1 \mathbb{P}[x_j = x +d \, | \, x_i = x] \, dx 
    = \frac{1}{2} \bigg(1 - \frac{d}{2} \bigg) \,.
\end{align*}

The dynamics and approximation continue to match well until $t=2$, after which they begin to diverge. As stated above, this is to be expected from a small-time approximation. For $t=4$ weights in the dynamics are higher than expected, indicating that pairs are closer than predicted by the linear distance approximation, leading to higher values of $\phi$ and subsequently higher weights. This error is most likely caused by the error in the linear approximation of $d_{ij}$ propagating through our estimates. 

Looking at $t\leq2$ we see that in both dynamics and the approximation the distribution is concentrated near the lower bound as the majority of weights begin at $w_{ij}(0)=0$. However, we see the formation of two regions of higher density near $d_{ij}=0$ and slightly above $d_{ij}=0.5$. The first corresponds to pairs of individuals who are initially close in opinion but disconnected in the network, hence they rapidly form a stronger connection. The second corresponds to pairs who are initially too far away to interact but over time move closer and form a weak connection. As the $x$-values are given by $d_{ij}(0)$ their distribution does not change, hence the increased density indicates a small range of possible weights for that value of $d_{ij}(0)$, meaning greater predictability. Distances in the range $d_{ij}\in [0.25,0.5]$ do not show this increase in density, indicating that their edge weight may plausibly take a broader range of values. In the approximation these higher-density regions become more pronounced at $t=4$, while in the dynamics only the region near $d_{ij}=0$ persists. This is likely due to non-linearity in the dynamics which reduces the predictability of the network at later timepoints. 

In both the dynamics and approximation, all edges are effectively removed for values $d_{ij}\geq 1$ by $t=4$, indicating the formation of communities within the network, similarly to Figure \ref{fig:exp_logistic_clusters}. However, in this case we have memory weight dynamics rather than logistic weight dynamics and from Corollary \ref{Cor: Exp, memory weight dynamics} we know that eventually the population must reach consensus and $w_{ij}\rightarrow \phi(0)=1$. Hence, these initially removed edges must reappear at a later time. This long-time behaviour is quite different from the behaviour of the short-time approximation shown in Figure \ref{fig:weight dynamics short time estimates}, indicating an initial clustering phase, accompanied by a corresponding formation of communities in the network, before a final convergence to consensus. 

Results for logistic weight dynamics (using the same interaction function and setup for initial conditions) are shown in Figure \ref{fig:early weight dynamics logistic}, where there is good agreement even at later times. In this case $w_{ij}=0,1$ cannot be used to form upper and lower bounds as these are fixed points of the dynamics. However, using $v_{ij} = \EE[v_{ij}] \pm 2\sqrt{\var{[v_{ij}]}}$ we can give reasonable upper and lower estimates for $\phi_{ij}(t)$ using \eqref{eqn: phi approximation}. These are denoted $\phi_{ij}^\pm(t)$. From these we can find the distances $d^\pm(t)$ at which $\phi^\pm(d(t))=1/2$ to separate regions of increasing and decreasing weights. For $d_{ij}\leq d^-(t)$ distances are small so any weight that is initially non-zero will have increased to near 1. For $d_{ij}\geq d^+(t)$ initial distances are too great and edges with $w_{ij}<1$ will have decreased towards 0. For $d_{ij} \in (d^-(t),d^+(t))$ we expect edges to be in transition. This behaviour is exactly what is observed in Figure \ref{fig:early weight dynamics logistic}. 

In both cases we are limited here to predicting individual weights. Expanding the approach to describe the evolution of broader network structures would remove a currently central assumption of independence between edges, which is present at $t=0$ but lost over time. 

\section{Conclusion} \label{Section: Conclusion}

In this paper, we examined both the long-term and short-term behaviour of the dynamic network model \eqref{Eqn: general dynamic network system}. When the interaction function is strictly positive over $[0,2]$, the convergence behaviour of the model is shown to match that of the fixed network case, with the formation of distinct opinion clusters, except that with a dynamic network, there is also a corresponding emergence of community structure. In the case of memory weight dynamics, we show convergence to consensus and a fully connected network, although the exploration of short-term dynamics in Section \ref{Section: Early dynamics} shows that there is also an initial phase of clustering in both opinions and the edge weights. 

When the interaction function does not have full support, as with the smoothed bounded confidence function, the picture is more complex. For memory weight dynamics a similar convergence to clustering result holds, while for logistic weight dynamics, the strong dependence on initial conditions prevents the same analytical approach from being applied. A key factor is that network dynamics remove the requirement that clusters must lie a distance apart where the interaction function is zero, as highlighted by the examples shown in Figure \ref{fig:diagram} and Figure \ref{fig:diagram 2}. This inability to locate clusters creates further dependence on the initial network structure. The impact of this structure was explored through simulations in Section \ref{Section: Simulations BC} using different random network models. Although the precise behaviours differed, all models showed a transition from clustering to polarisation to consensus, with network parameters affecting the location of these transitions. Moreover, logistic weight dynamics universally reduced polarisation, although it had little effect when the interaction function had a very small radius.

The various approaches in this paper provide a more complete understanding of both the short- and long-time dynamics of the coupled system \eqref{Eqn: general dynamic network system} under a variety of interaction functions and network dynamics. In general, the adaptive network drives the formation of communities that mirror opinion clusters, but the precise behaviour is highly dependent on the support of the interaction function and the ability of the weight dynamics to create new edges. 

\newpage
\bibliography{bibliography}
\bibliographystyle{unsrt}

\section*{Acknowledgments}
AN was supported by Engineering and Physical Sciences Research Council through the Mathematics of Systems II Centre for Doctoral Training at the University of Warwick (reference EP/S022244/1). 

For the purpose of open access, the authors have applied a Creative Commons Attribution (CC-BY) license to Any Author Accepted Manuscript version arising from this submission.

The authors would like to thank Sasha Glendinning for their assistance in proof reading the manuscript. 

\newpage
\appendix

\section{Proofs} \label{Appendix: Proofs}

Proof of Proposition \ref{Prop: moments of v_ij}
\begin{proof}
    We begin with the first moment of $v_{ij}$,
    \begin{align}
        \EE[v_{ij}\,|\,d_{ij}] &= \int_{-1}^{1} \EE[v_{ij}\,|\,d_{ij}, x_i = x] \, \mathbb{P}[x_i = x | d_{ij}] \, dx \nonumber\\
        &= \frac{1}{2-d_{ij}}\int_{-1}^{1-d_{ij}} \EE[v_{ij}\,|x_i = x, x_j = x+d_{ij}] \, dx \nonumber\\
        &= \frac{1}{N(2-d_{ij})}\int_{-1}^{1-d_{ij}} \EE\Bigg[\sum_{n=1}^N \phi_{jn}(0) \, d_{jn}(0) - \phi_{in}(0) \, d_{in}(0)\,\bigg|x_i = x, x_j = x+d_{ij}\Bigg] \, dx \,. \label{eqn: First moment of v_ij (1)}
    \end{align}
    The sum inside the expectation is then separated into terms involving $i,j$ and other independent terms, 
    \begin{align*}
        \frac{1}{N}\EE\Bigg[\sum_{n=1}^N \phi_{jn}(0) \, d_{jn}(0) - \phi_{in}(0) \, d_{in}(0)\Bigg]
        &= \frac{1}{N}\EE\Bigg[\sum_{n=1}^N \phi_{jn} \, d_{jn} - \phi_{in} \, d_{in}\Bigg] \\
        &= \frac{1}{N}\EE\Bigg[\sum_{n\neq i,j}^N \phi_{jn} \, d_{jn} - \phi_{in} \, d_{in}\Bigg] \,.
    \end{align*}
    Recall that $d_{ii} = d_{jj} = 0$, so we have 
    \begin{align*}
    \frac{1}{N}\EE\Bigg[\sum_{n=1}^N \phi_{jn}(0) \, d_{jn}(0) - \phi_{in}(0) \, d_{in}(0)\Bigg]
        &+ \frac{1}{N}\EE\Bigg[\phi_{ji} \, d_{ji} - \phi_{ii} \, d_{ii} + \phi_{jj} \, d_{jj} - \phi_{ij} \, d_{ij}\Bigg] \\
        &= \Big(1-\frac{2}{N}\Big) \EE[\phi_{jn} \, d_{jn} - \phi_{in} \, d_{in}] \\
        &+ \frac{1}{N}\EE\Bigg[\phi_{ji} \, d_{ji} - \phi_{ij} \, d_{ij}\Bigg] \\
        &= \Big(1-\frac{2}{N}\Big) I_1(d_{ij},x) - \frac{2}{N}\phi_{ij} \, d_{ij}\,.
    \end{align*}
    So \eqref{eqn: First moment of v_ij (1)} becomes
    \begin{align*}
        \EE[v_{ij}\,|\,d_{ij}] &= \frac{1}{2-d_{ij}}\int_{-1}^{1-d_{ij}} \Big(1-\frac{2}{N}\Big) I_1(d_{ij},x) - \frac{2}{N}\phi_{ij} \, d_{ij} \, dx \,.
    \end{align*}
    Next we consider the second moment.
    \begin{align*}
        \EE[v_{ij}^2\,|\,d_{ij}] 
        &= \int_{-1}^{1} \EE[v_{ij}^2\,|\,d_{ij}, x_i = x] \, \mathbb{P}[x_i = x | d_{ij}] \, dx \\
        &= \frac{1}{2-d_{ij}}\int_{-1}^{1-d_{ij}} \EE[v_{ij}^2\,|x_i = x, x_j = x+d_{ij}] \, dx \\
        &= \frac{1}{(2-d_{ij})}\int_{-1}^{1-d_{ij}} \EE\Bigg[\bigg(\sum_{n=1}^N \frac{w_{jn}}{k_j}\phi_{jn}(0) \, d_{jn}(0) - \frac{w_{in}}{k_i}\phi_{in}(0) \, d_{in}(0)\bigg)^2\,\bigg|x_i = x, x_j = x+d_{ij}\Bigg] \, dx \,.
    \end{align*} 
    To simplify notation we write $\omega_{ij} = w_{ij}/k_i$. 
    
    This time more cases are required. We consider each of the following:
    \begin{enumerate}
        \item $n=i$, $m=1,\dots,N$
        \item $n=j$, $m=1,\dots,N$
        \item $n\neq i,j$, $m=i$, 
        \item $n\neq i,j$, $m=j$, 
        \item $n\neq i,j$, $m=n$, 
        \item $n\neq i,j$, $m\neq i,j,n$. 
    \end{enumerate}
    
    \textbf{Case 1:} ($n=i$, $m=1,\dots,N$)
    
    \begin{align*}
        \EE\Bigg[&\sum_{m=1}^N (\omega_{ji}\phi_{ji} \, d_{ji} )(\omega_{jm}\phi_{jm} \, d_{jm} - \omega_{im}\phi_{im} \, d_{im}) \Bigg] \\
        &= \EE\Bigg[(\omega_{ji}\phi_{ji} \, d_{ji} )^2 
        - (\omega_{ji}\phi_{ji} \, d_{ji} )(\omega_{ij}\phi_{ij} \, d_{ij}) 
        + \sum_{m\neq i,j} (\omega_{ji}\phi_{ji} \, d_{ji} )(\omega_{jm}\phi_{jm} \, d_{jm} - \omega_{im}\phi_{im} \, d_{im}) \Bigg] \\
        &= (\phi_{ij} \, d_{ij} )^2\EE[\omega_{ji}^2 ] 
        + (\phi_{ij} \, d_{ij} )^2 \EE[\omega_{ji} \, \omega_{ij} ]
        - (\phi_{ij} \, d_{ij}) \frac{1}{N} \EE\Bigg[\sum_{m\neq i,j} (\omega_{jm}\phi_{jm} \, d_{jm} - \omega_{im}\phi_{im} \, d_{im}) \Bigg] \\
        &= (\phi_{ij} \, d_{ij} )^2\EE[\omega^2 ] 
        + (\phi_{ij} \, d_{ij} )^2 \frac{1}{N^2}
        - (\phi_{ij} \, d_{ij}) \frac{1}{N^2} \EE\Bigg[\sum_{m\neq i,j} (\phi_{jm} \, d_{jm} - \phi_{im} \, d_{im}) \Bigg]  \,.
    \end{align*}
    In the first line, we separate out the terms $m=i,j$ as these are not independent. In the second line, the known values of $\phi_{ij}\,d_{ij}$ are removed from the expectations. In the third line, the independence of $\omega_{ij}$'s is used. 

    Next we use that $\EE[\omega]=N^{-1}$ to write $\EE[\omega^2]=\var[\omega] + N^{-2}$. In addition the $m\neq i,j$ terms are collected as $(N-2)$ independent copies of $I_1(d_{ij},x)$. This gives
    \begin{align*}
        \EE\Bigg[&\sum_{m=1}^N (\omega_{ji}\phi_{ji} \, d_{ji} )(\omega_{jm}\phi_{jm} \, d_{jm} - \omega_{im}\phi_{im} \, d_{im}) \Bigg] \\
        &= (\phi_{ij} \, d_{ij} )^2 \bigg( \var[w] + \frac{1}{N^2} \bigg) 
        + (\phi_{ij} \, d_{ij} )^2 \frac{1}{N^2}
        - (\phi_{ij} \, d_{ij}) \frac{N-2}{N^2} I_1(d_{ij},x) \\
        &= (\phi_{ij} \, d_{ij} )^2 \, \var[w]  
        + \frac{\phi_{ij} \, d_{ij}}{N^2} \Bigg( 2(\phi_{ij} \, d_{ij} )
        - (N-2) I_1(d_{ij},x) \Bigg) \,.
    \end{align*}
    Here, the last line is a simple rearrangement of terms. 
    
    \textbf{Case 2:} ($n=j$, $m=1,\dots,N$)
    
    This is the same as Case 1. 
    \begin{align*}
        & \EE\Bigg[\sum_{m=1}^N (\phi_{jj} \, d_{jj} - \phi_{ij} \, d_{ij})(\phi_{jm} \, d_{jm} - \phi_{im} \, d_{im}) \Bigg] \\
        &= (\phi_{ij} \, d_{ij} )^2 \, \var[w]  
        + \frac{\phi_{ij} \, d_{ij}}{N^2} \Bigg( 2(\phi_{ij} \, d_{ij} )
        - (N-2) I_1(d_{ij},x) \Bigg) \,.
    \end{align*}

    \textbf{Case 3:} ($n\neq i,j$, $m=i$)
    
    Due to the symmetry between $n$ and $m$ this will give
    \begin{align*}
        & \EE\Bigg[\sum_{n\neq i,j} (\phi_{jn} \, d_{jn} - \phi_{in} \, d_{in})(\phi_{ji} \, d_{ji} - \phi_{ii} \, d_{ii}) \Bigg] \\
        &= (\phi_{ij} \, d_{ij} )^2 \, \var[w]  
        - \frac{\phi_{ij} \, d_{ij}}{N^2} (N-2) I_1(d_{ij},x) \,.
    \end{align*}
    In this case the $n=i,j$ terms have already been removed so do not need to be accounted for separately. 
    
    \textbf{Case 4:} ($n\neq i,j$, $m=j$)
    
    This is the same as Case 3. 
    \begin{align*}
        & \EE\Bigg[\sum_{n\neq i,j} (\phi_{jn} \, d_{jn} - \phi_{in} \, d_{in})(\phi_{jj} \, d_{jj} - \phi_{ij} \, d_{ij}) \Bigg] \\
        &= (\phi_{ij} \, d_{ij} )^2 \, \var[w]  
        - \frac{\phi_{ij} \, d_{ij}}{N^2} (N-2) I_1(d_{ij},x) \,.
    \end{align*}

    \textbf{Case 5:} ($n\neq i,j$, $m=n$)
    
    \begin{align*}
        E &:= \EE\Bigg[\sum_{n\neq i,j} (\omega_{jn}\phi_{jn} \, d_{jn} - \omega_{in}\phi_{in} \, d_{in})(\omega_{jn}\phi_{jn} \, d_{jn} - \omega_{in}\phi_{in} \, d_{in}) \Bigg] \\
        &= \EE\Bigg[\sum_{n\neq i,j} (\omega_{jn}\phi_{jn} \, d_{jn} - \omega_{in}\phi_{in} \, d_{in})^2 \Bigg] \\
        &= (N-2) \EE\Big[(\omega_{jn}\phi_{jn} \, d_{jn} - \omega_{in}\phi_{in} \, d_{in})^2 \Big] \,.
    \end{align*}
    From here, we expand the square inside the expectation and use the independence between $\omega$'s and opinions to obtain
    \begin{align*}
        E&= (N-2) \bigg( \EE[\omega^2] \EE\Big[(\phi_{jn} \, d_{jn})^2 \Big]  - 2 \frac{1}{N^2} \EE\Big[(\phi_{jn} \, d_{jn}) (\phi_{in} \, d_{in})\Big] + \EE[\omega^2] \EE\Big[(\phi_{in} \, d_{in})^2 \Big] \bigg) \,.
    \end{align*}
    Note that although $d_{ij}$ (and consequently $\phi_{ij}$) is known, $\phi_{jn} d_{jn}$ and $\phi_{in} d_{in}$ are still random variables and their distributions are different (since $x_j$ and $x_i$ are different). Again using $\EE[\omega^2]=\var[\omega] + N^{-2}$ and rearranging gives 
    \begin{align*}
        E &= (N-2) \bigg( \Big(\var[w]+\frac{1}{N^2}\Big) \EE\Big[(\phi_{jn} \, d_{jn})^2 + (\phi_{in} \, d_{in})^2  \Big]  - 2 \frac{1}{N^2} \EE\Big[(\phi_{jn} \, d_{jn}) (\phi_{in} \, d_{in})\Big] \bigg) \\
        &= (N-2) \bigg( \var[w] \EE\Big[(\phi_{jn} \, d_{jn})^2 + (\phi_{in} \, d_{in})^2  \Big] + \frac{1}{N^2} \EE\Big[(\phi_{jn} \, d_{jn})^2 + (\phi_{in} \, d_{in})^2  -2(\phi_{jn} \, d_{jn}) (\phi_{in} \, d_{in})\Big] \bigg) \,.
    \end{align*}
    The second term can now again be written as a square, giving 
    \begin{align*}
        E &= (N-2) \bigg( \var[w] \EE\Big[(\phi_{jn} \, d_{jn})^2 + (\phi_{in} \, d_{in})^2 \Big] + \frac{1}{N^2} \EE\Big[(\phi_{jn} \, d_{jn} -\phi_{in} \, d_{in})^2 \Big] \bigg) \\
        &= (N-2) \var[w] \Big(J(x + d_{ij}) + J(x) \Big) + \frac{N-2}{N^2} I_2(d_{ij},x) \,,
    \end{align*}
    where
    \begin{align*}
        J(x) = \frac{1}{2} \int_{-1}^1 \Big(\phi(y-x)\,(y-x) \Big)^2 \, dy \,.
    \end{align*}
    
    \textbf{Case 6:} ($n\neq i,j$, $m\neq i,j,n$)
    
    When $m\neq n,i,j$ the random variables $(\omega_{jn} \phi_{jn} \, d_{jn} - \omega_{in} \phi_{in} \, d_{in})$ and $(\omega_{jm} \phi_{jm} \, d_{jm} - \omega_{im} \phi_{im} \, d_{im})$ are immediately uncorrelated. Therefore, there is no need to separate any term,s and we may calculate directly
    \begin{align*}
        &\EE\Bigg[\sum_{n\neq i,j} \sum_{m\neq n,i,j} ( \omega_{jn} \phi_{jn} \, d_{jn} - \omega_{in} \phi_{in} \, d_{in})( \omega_{jm} \phi_{jm} \, d_{jm} - \omega_{im} \phi_{im} \, d_{im}) \Bigg] \\
        &= \sum_{n\neq i,j} \sum_{m\neq n,i,j}\EE[\omega_{jn} \phi_{jn} \, d_{jn} - \omega_{in} \phi_{in} \, d_{in}]\,\EE[\omega_{jm} \phi_{jm} \, d_{jm} - \omega_{im} \phi_{im} \, d_{im}] \\
        &= \frac{1}{N^2}\sum_{n\neq i,j} \sum_{m\neq n,i,j} I_1(d_{ij},x)^2 \\
        &= \frac{(N-2)(N-3)}{N^2} I_1(d_{ij},x)^2 \,.
    \end{align*}
    
    Putting all this together, we have
    \begin{align*}
        &\EE\Bigg[ \bigg(\sum_{n=1}^N \phi_{jn} \, d_{jn} - \phi_{in} \, d_{in}\bigg) \bigg(\sum_{n=1}^N \phi_{jn} \, d_{jn} - \phi_{in} \, d_{in}\bigg) \Bigg] \\
        &= 2(\phi_{ij} \, d_{ij} )^2 \, \var[w]  
        + 2\frac{\phi_{ij} \, d_{ij}}{N^2} \Bigg( 2(\phi_{ij} \, d_{ij} )
        - (N-2) I_1(d_{ij},x) \Bigg) \\
        &+ 2(\phi_{ij} \, d_{ij} )^2 \, \var[w]  
        - 2\frac{\phi_{ij} \, d_{ij}}{N^2} (N-2) I_1(d_{ij},x) \\
        &+  (N-2) \var[w] \Big(J(x + d_{ij}) + J(x) \Big) + \frac{1}{N^2} (N-2) I_2(d_{ij},x) \\
        &+ \frac{1}{N^2} (N-2)(N-3) I_1(d_{ij},x)^2 \\
        &= \Big( 4(\phi_{ij} \, d_{ij} )^2 \, + (N-2) \big(J(x + d_{ij}) + J(x) \big) \Big) \var[w] \\
        &+  \frac{4\phi_{ij} \, d_{ij}}{N^2} \Big( \phi_{ij} \, d_{ij} 
        - (N-2) I_1(d_{ij},x)\Big) + \frac{1}{N^2} (N-2) I_2(d_{ij},x) + \frac{1}{N^2} (N-2)(N-3) I_1(d_{ij},x)^2 \,.
    \end{align*}
    
    Denoting 
    \begin{align*}
        f(d_{ij},x) &= (N-2)(N-3) I_1(d_{ij},x)^2 + (N-2) I_2(d_{ij},x) + 4(\phi_{ij}d_{ij})^2 - 4(\phi_{ij}d_{ij})(N-2)I_1(d_{ij},x) \,,\\
        g(d_{ij},x) &= 4(\phi_{ij} \, d_{ij} )^2 \, + (N-2) \big(J(x + d_{ij}) + J(x) \big) \,,
    \end{align*}
    we have altogether 
    \begin{align*}
        \EE[v_{ij}\,|\,d_{ij}] &= \frac{1}{2-d_{ij}}\int_{-1}^{1-d_{ij}} I_1(d_{ij},x) -\frac{2}{N} \Big( I_1(d_{ij},x) + \phi_{ij} \, d_{ij} \Big)\, dx \,,\\
        \EE[v_{ij}^2\,|\,d_{ij}] 
        &= \frac{1}{2-d_{ij}}\int_{-1}^{1-d_{ij}} \frac{1}{N^2}f(x) + \var[\omega] g(x) \, dx \,.
    \end{align*} 
    From \eqref{eqn: weight moment approximations} we have
    \begin{align*}
        \var[\omega] \approx \frac{4-3p}{3 N^2 p} \,,
    \end{align*}
    hence
    \begin{align*}
        \EE[v_{ij}\,|\,d_{ij}] &= \frac{1}{2-d_{ij}}\int_{-1}^{1-d_{ij}} I_1(d_{ij},x) -\frac{2}{N} \Big( I_1(d_{ij},x) + \phi_{ij} \, d_{ij} \Big)\, dx \,,\\
        \EE[v_{ij}^2\,|\,d_{ij}] 
        &= \frac{1}{2-d_{ij}}\int_{-1}^{1-d_{ij}} \frac{1}{N^2}f(x) + \frac{4-3p}{3pN^2} g(x) \, dx \,.
    \end{align*} 
\end{proof}

\section{Example trajectories} \label{Appendix: Example trajectories}

Below are several versions of Figure \ref{fig:exp_logistic_clusters} using different random network models to generate $w(0)$. As in Figure \ref{fig:exp_logistic_clusters}, an exponential interaction function is used. A detailed exploration of the dynamics under bounded confidence interaction functions can be found in Section \ref{Section: Simulations BC}.


\begin{figure}[ht!]
    \centering
    \begin{subfigure}{\textwidth}
        \centering
        \includegraphics[width=0.8\linewidth]{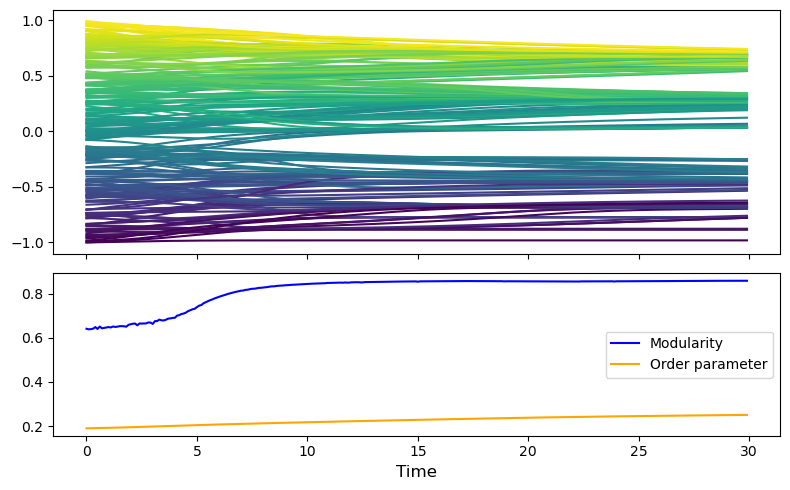}
        \caption{The top panel shows opinion trajectories over time, coloured according to initial opinion. Although some opinion clusters form, there are many individuals who become isolated. The bottom panel shows the order parameter and network modularity over time. The modularity is initially high and rises further at the start of the dynamics as the network removes many edges.}
    \end{subfigure}
    \begin{subfigure}{\textwidth}
        \centering
        \includegraphics[width=0.75\linewidth]{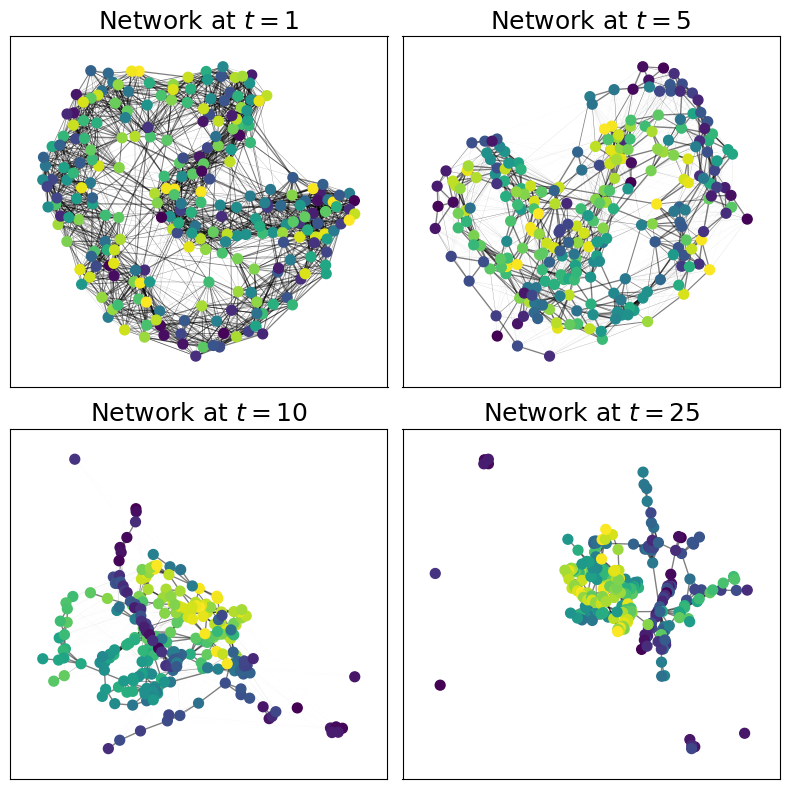}
        \caption{Snapshots of network structure. Edge width is drawn according to edge weight; nodes are coloured according to individuals' initial opinions. The network's initial ring structure is broken down into a collection of chains.}
    \end{subfigure}
    \caption{Demonstration of opinion formation under logistic weight dynamics using a \textbf{WS initial network}. A complete description of the setup can be found in Section \ref{Section: random network models}.}
    \label{fig:example WS network}
\end{figure}


\begin{figure}[ht!]
    \centering
    \begin{subfigure}{\textwidth}
        \centering
        \includegraphics[width=0.8\linewidth]{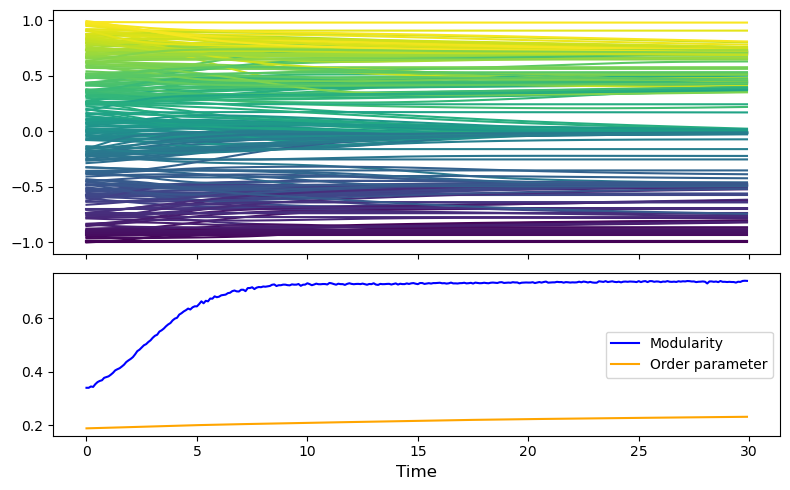}
        \caption{The top panel shows opinion trajectories over time, coloured according to initial opinion. Many individuals become isolated, and there is minimal opinion clustering. The bottom panel shows the order parameter and network modularity over time. The modularity rises continuously from $t=0$ to approximately $t=6$ at which point it levels out.}
    \end{subfigure}
    \begin{subfigure}{\textwidth}
        \centering
        \includegraphics[width=0.75\linewidth]{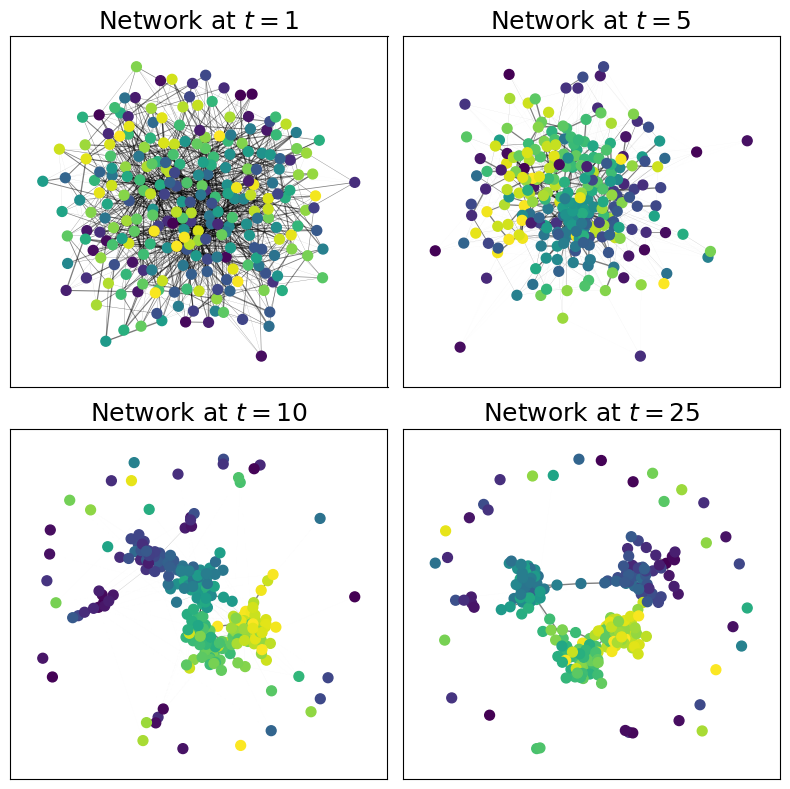}
        \caption{Snapshots of network structure. Edge width is drawn according to edge weight; nodes are coloured according to individuals' initial opinions. The network has a well-connected centre and fewer edges in the periphery. Over the dynamics, these outer edges are lost, leading to isolated individuals and three central clusters.}
    \end{subfigure}
    \caption{Demonstration of opinion formation under logistic weight dynamics using a \textbf{BA initial network}. A complete description of the setup can be found in Section \ref{Section: random network models}.}
    \label{fig:example BA network}
\end{figure}


\begin{figure}[ht!]
    \centering
    \begin{subfigure}{\textwidth}
        \centering
        \includegraphics[width=0.8\linewidth]{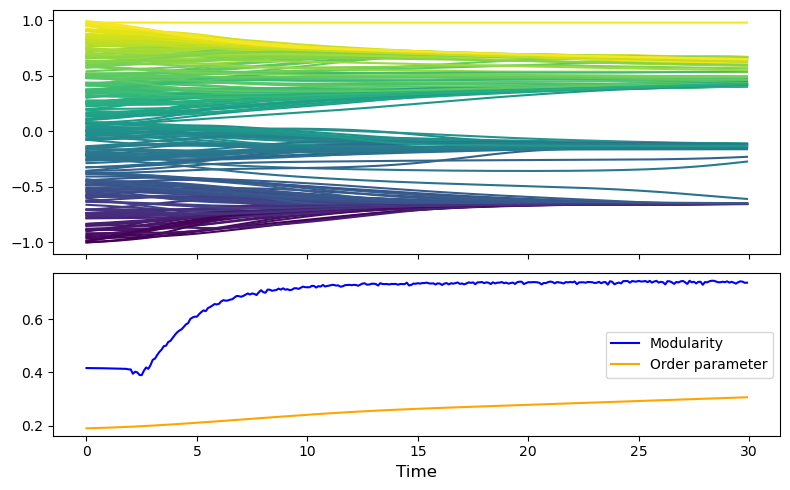}
        \caption{The top panel shows opinion trajectories over time, coloured according to initial opinion. Three distinct clusters form as weight dynamics rearrange the network's communities. The bottom panel shows the order parameter and network modularity over time. The modularity is initially constant, decreases slightly during the transition from two to three communities, then increases again.}
    \end{subfigure}
    \begin{subfigure}{\textwidth}
        \centering
        \includegraphics[width=0.75\linewidth]{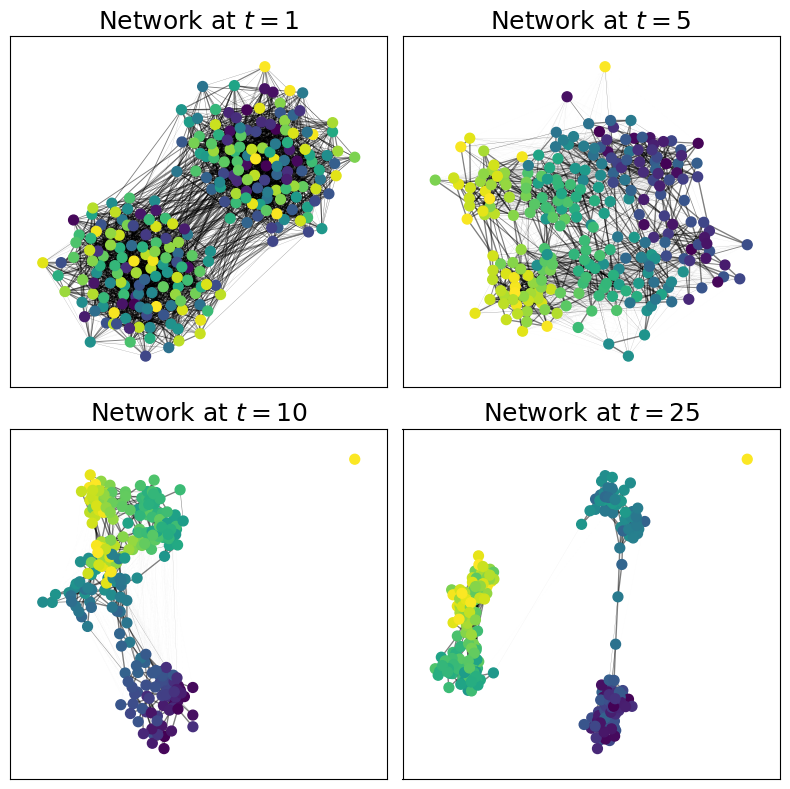}
        \caption{Snapshots of network structure. Edge width is drawn according to edge weight; nodes are coloured according to individuals' initial opinions. The network initially has two clear communities. Over time, these imposed communities are rearranged into three communities based on individuals' opinions.}
    \end{subfigure}
    \caption{Demonstration of opinion formation under logistic weight dynamics using a \textbf{SBM initial network}. A complete description of the setup can be found in Section \ref{Section: random network models}.}
    \label{fig:example SBM network}
\end{figure}

\end{document}